\theoremstyle{plain}
\numberwithin{equation}{section}
\begin{document}
\begin{frontmatter}
\title{Large-scale inference of correlation among mixed-type biological traits with phylogenetic multivariate probit models}
\runtitle{Phylogenetic multivariate probit models}

\begin{aug}

	\author{\fnms{Zhenyu} \snm{Zhang}\thanksref{m1}\ead[label=e1]{zyz606@ucla.edu}},
	\author{\fnms{Akihiko} \snm{Nishimura}\thanksref{m7}\ead[label=e2]{akihiko4@gmail.com}},
	\author{\fnms{Paul} \snm{Bastide}\thanksref{m2,m9}\ead[label=e3]{paul.bastide@umontpellier.fr}},
	\author{\fnms{Xiang}\
	\snm{Ji}\thanksref{m8}\ead[label=e4]{xji4@tulane.edu}},
	\author{\fnms{Rebecca P.}\
	\snm{Payne}\thanksref{m3}\ead[label=e5]{rebecca.payne2@ncl.ac.uk}},
	\author{\fnms{Philip}\
	\snm{Goulder}\thanksref{m4,m5,m6}\ead[label=e6]{philip.goulder@paediatrics.ox.ac.uk}},
	\author{\fnms{Philippe}
	\snm{Lemey}\thanksref{m2}\ead[label=e7]{philippe.lemey@kuleuven.be}}
	\and
	\author{\fnms{Marc A.} \snm{Suchard}\thanksref{m1} \corref{} \ead[label=e8]{msuchard@ucla.edu}}

	\runauthor{Z.~Zhang et al.}

	\affiliation{University of California, Los Angeles\thanksmark{m1}, Johns Hopkins University \thanksmark{m7}, KU Leuven\thanksmark{m2}, Universit\'e de Montpellier\thanksmark{m9}, Tulane University \thanksmark{m8}, \mbox{Newcastle University\thanksmark{m3}}, University of Oxford\thanksmark{m4}, University of KwaZulu-Natal\thanksmark{m5}, and Ragon Institute of MGH, MIT, and Harvard\thanksmark{m6}}

		\address{
		Z. Zhang\\
		Department of Biostatistics\\
		Fielding School of Public Health\\
		University of California, Los Angeles\\
		Los Angeles, California 90095-1772\\
		USA\\
		\printead{e1}
	}

	\address{
		A. Nishimura\\
		Department of Biostatistics\\
		Bloomberg School of Public Health\\
		Johns Hopkins University\\
		615 N. Wolfe Street\\
		Baltimore, MD 21205\\
		USA\\
		\printead{e2}\\
	}

	\address{
		P. Bastide\\
        IMAG\\
        Universit\'e de Montpellier\\
        CNRS\\
        Place Eug\`ene Bataillon\\
        34090 Montpellier\\
        France\\
		\printead{e3}\\
	}

	\address{
		X. Ji\\
		Department of Mathematics\\
		School of Science \& Engineering\\
		Tulane University\\
		6823 St. Charles Avenue\\
		New Orleans, Louisiana 70118-5698\\
		USA\\
		\printead{e4}\\
    }

	\address{
	R. P. Payne\\
	Translational and Clinical Research Institute\\
	Newcastle University  \\
	Newcastle upon Tyne, NE2 4HH\\
	UK\\
	\printead{e5}
}

	\address{
	P. Goulder\\
	Department of Paediatrics\\
	University of Oxford \\
	Headington, Oxford, OX3 9DU\\
	UK\\
	HIV Pathogenesis Programme \\
	Doris Duke Medical Research Institute \\
	University of KwaZulu-Natal\\
	238 Mazisi Kunene Rd, Glenwood, Durban\\
	South Africa\\
	Ragon Institute of MGH, MIT, and Harvard \\
	Cambridge, MA 02139-3583\\
	USA\\
	\printead{e6}
}

	\address{
		P. Lemey\\
		Department of Microbiology, \\
		\, Immunology and Transplantation\\
		Rega Institute\\
		KU Leuven\\
		Herestraat 49\\
		3000 Leuven\\
		Belgium\\
		\printead{e7}\\
	}

	\address{
		M. A. Suchard\\
		Department of Biomathematics,\\
		\, Biostatistics and human genetics\\
		University of California, Los Angeles\\
		6558 Gonda Building\\
		695 CHARLES E. Young Drive\\
		Los Angeles, California 90095-1766\\
		USA\\
		\printead{e8}
	}
\end{aug}

\begin{abstract}
Inferring concerted changes among biological traits along an evolutionary history remains an important yet challenging problem.
Besides adjusting for spurious correlation induced from the shared history, the task also requires sufficient flexibility and computational efficiency to incorporate multiple continuous and discrete traits as data size increases.
To accomplish this, we jointly model mixed-type traits by assuming latent parameters for binary outcome dimensions at the tips of an unknown tree informed by molecular sequences. This gives rise to a phylogenetic multivariate probit model.
With large sample sizes, posterior computation under this model is problematic, as it requires repeated sampling from a high-dimensional truncated normal distribution.
Current best practices employ multiple-try rejection sampling that suffers from slow-mixing and a computational cost that scales quadratically in sample size.
We develop a new inference approach that exploits 1) the bouncy particle sampler (BPS) based on piecewise deterministic Markov processes to simultaneously sample all truncated normal dimensions, and 2) novel dynamic programming that reduces the cost of likelihood and gradient evaluations for BPS 
to linear in sample size. In an application with 535 HIV viruses and 24 traits that necessitates sampling from a 12,840-dimensional truncated normal, our method makes it possible to estimate the across-trait correlation and detect factors that affect the pathogen's capacity to cause disease.
This inference framework is also applicable to a broader class of covariance structures beyond comparative biology.
\end{abstract}

\begin{keyword}
	\kwd{Bayesian phylogenetics}
	\kwd{probit models}
	\kwd{bouncy particle sampler}
	\kwd{dynamic programming}
	\kwd{HIV evolution}
\end{keyword}

\end{frontmatter}

\section{Introduction}

Phylogenetics stands as a key tool in assessing rapidly evolving pathogen diversity and its impact on human disease.
Important taxonic examples include RNA viruses, such as influenza and human immunodeficiency virus (HIV).
Pathogens sampled from infected individuals are implicitly correlated with each other through their shared evolutionary history, often described through a phylogenetic tree that one reconstructs by sequencing the pathogen genomes.
Drawing inference about concerted changes within multiple measured pathogen and host traits along this history leads to highly structured models.
These models must simultaneously entertain and adjust for the across-taxon correlation and the between-trait correlation that characterizes the trait evolutionary process, leading to high computational burden.
This burden arises from the need to integrate over the unobserved trait process and possible uncertainty in the history.
This burden grows more challenging as the sample size, both in terms of number of taxa $\nTaxa$ and number of traits $\nTraits$, increases and, especially, when traits are of mixed-type, including both continuous quantities and discrete outcomes.
Here, even best current practices \citep{Cybis2015} fail to provide reliable estimates for emerging biological problems due to high computational complexity.

To jointly model continuous and binary trait evolution along an unknown tree, we adopt and extend the popular phylogenetic threshold model for binary traits \citep{felsenstein2005, felsenstein2011cmp} with a long tradition in statistical genetics \citep{wright1934analysis}.
This model assumes that unobserved continuous latent parameters for each tip taxon in the tree determine the observed binary traits according to a threshold.
The latent parameters themselves arise from a Brownian diffusion along the tree \citep{felsenstein1985phylogenies}. The correlation matrix of the diffusion process informs correlation between latent parameters that map to concerted changes between binary traits. Here one interprets the latent parameters as the combined effect of all relevant genetic factors that influence the binary traits after adjusting for the shared evolutionary history.

As in \citet{Cybis2015}, we extend the threshold model to include continuous traits by treating them as directly observed dimensions of the latent parameters. We recognize an identifiability issue in \citet{Cybis2015} and address this limitation with specific constraints on the diffusion covariance. We arrive at a mixed-type generalization of the  multivariate probit model \citep{chib1998} that allows us to jointly model continuous and binary traits. We call this the phylogenetic multivariate probit model. Similar strategies for mixed-type data that assume latent processes underlying discrete data are commonly employed in various domain fields, including the biological and ecological sciences \citep{schliep2013multilevel, irvine2016extending,clark2017generalized}, optimal design \citep{fedorov2012optimal}, and computer experiments \citep{pourmohamad2016multivariate}. The observed outcomes can also be conveniently clustered \citep{dunson2000bayesian, murray2013bayesian}.
Likewise, our phylogenetic probit model is easily extendable to categorical and ordinal data \citep{Cybis2015}.

Alternative approaches for mixed-type traits on unknown trees are limited.
Phylogenetic regression models \citep{grafen1989phyreg} assume a known fixed tree and their logistic extensions \citep{ives2009phylogenetic} take a single binary trait as the regression outcome.
On the other hand, for continuous traits, comparative methods \citep{felsenstein1985phylogenies} scale well on random trees \citep{pybus2012,tung2014linear}.  Likewise, continuous-time Markov chain based methods \citep{pagel1994detecting,lewis2001likelihood} are popular for multiple binary traits, but restrictively assume independence between traits given the tree.

Bayesian inference for the phylogenetic multivariate probit model involves, however, repeatedly sampling latent parameters from an $\nTaxa \nTraits$ dimensional truncated normal distribution, with $ \nTaxa $ being the number of taxa and $ \nTraits $ the number of traits.  To attempt this, \cite{Cybis2015} use Markov chain Monte Carlo (MCMC) based on a multiple-try rejection sampler.
The sampler has a computational complexity of $\order{\nTaxa \nTraits^2}$ to update $ \nTraits $ dimensions of the latent parameters for just one taxon within a Gibbs cycle.
Hence, to touch all dimensions, the resulting cost is $ \order{\nTaxa^2 \nTraits^2} $.
Further,
since only a small portion of the latent parameter dimensions are updated per rejection-sample, the resulting MCMC chain is highly auto-correlated, hurting efficiency.

To overcome this limitation, we develop a scalable approach to sample from the multivariate truncated normal by combining the recently developed bouncy particle sampler (BPS) \citep{bouchard2018bouncy} and an extension of the dynamic programming strategy by \citet{pybus2012}.
BPS samples from a target distribution by simulating a Markov process with a piecewise linear trajectory.
The simulation generally requires solving a one-dimensional optimization problem within each line segment.
When sampling from a truncated normal, however, this optimization problem can be solved via a single log-density gradient evaluation.
In the phylogenetic multivariate probit model, a direct evaluation of this gradient requires $\order{\nTaxa^2 \nTraits + \nTaxa \nTraits^2}$ computation.
By extending the dynamic programming strategy of \citet{pybus2012} for diffusion processes on trees, we reduce this computational cost to $ \order{\nTaxa \nTraits^2} $ --- a major practical gain as $\nTaxa \gg \nTraits$ in most applications. Compared to the current practice, our BPS sampler achieves superior mixing rate, allowing us to attack previously unworkable problems.

We apply this Bayesian inference framework to assess correlation between HIV-1 \textit{gag} gene immune-escape mutations and viral virulence, the pathogen's capacity to cause disease. By adjusting for the unknown evolutionary history that confounds our epidemiologically collected data, we identify significant correlations that closely match with the biological experimental literature and increase our understanding of the underlying molecular mechanisms of HIV.

\section{Modeling}
\label{sec:model}

\subsection{Phylogenetic multivariate probit model for mixed-type traits}\label{sec:model1}
Consider $ \nTaxa $ biological taxa, each with $\nTraits$ trait measurements. These measurements partition as $ \observedResponse = \observedResponseMatrix= \left[\observedDiscrete, \observedCont\right]$ with $ \observedDiscrete $ being an $ \nTaxa \times \nTraitsDiscrete $ matrix of $\nTraitsDiscrete$ binary traits and $ \observedCont $  an $ \nTaxa \times \nTraitsCont $ matrix of $\nTraitsCont$ continuous traits, where $ \nTraits = \nTraitsDiscrete + \nTraitsCont$. We assume that $ \observedResponse $ arises from a partially observed multivariate Brownian diffusion process along a phylogenetic tree $ \phylogeny $. The tree $ \phylogeny = (\nodeSet, \branchSet) $ is a directed, bifurcating acyclic graph with a set of nodes $ \nodeSet $ and branch lengths $ \branchSet $. The node set $ \nodeSet $ contains $ \nTaxa $ degree-1 tip nodes, $ \nTaxa - 2 $ internal nodes of degree 3, and one root node of degree 2. The branch lengths $ \branchSet = \left( \branchLength{1}, \dots,   \branchLength{2\nTaxa - 2}\right) $ denote the distance in real time from each node to its parent (Figure \ref{fig:vmatrix}, left).
The tree $ \phylogeny $ is either known or informed by molecular sequence alignment $ \sequenceData $ \citep{beast2018}.

We associate each node $ \nodeIndexOne$ in $ \phylogeny $ with a latent parameter $ \latentData_\nodeIndexOne \in \realNumbers^{\nTraits}$ for $i = 1, \dots, 2\nTaxa - 1$. A Brownian diffusion process characterizes the evolutionary relationship between latent parameters, such that
$ \latentData_{\nodeIndexOne}$ is multivariate normal (MVN) distributed,
\begin{equation}
\latentData_{\nodeIndexOne} \sim \normalDistribution{\latentData_{\text{pa}\left(\nodeIndexOne\right)}}{\branchLength{\nodeIndexOne}\traitCovariance},
\end{equation}
centered at its parent node value $\latentData_{\text{pa}\left(\nodeIndexOne\right)}$ with across-trait, per-unit-time, $ \nTraits \times \nTraits $ variance matrix $ \traitCovariance $ that is shared by all branches along $\phylogeny$.

At the tips of $ \phylogeny $, we collect the $ \nTaxa \times \nTraits$ matrix  $ \latentData = \latentDataMatrix = \left[\latentData_{1}, \dots, \latentData_\nTaxa \right]\transpose $ and map it to the observed traits through the function
\begin{equation}\label{eq:thresholdFunc}
\observedResponseElement = \mappingFunc{\latentDataMatrixElement}=
\begin{cases}
\sign(\latentDataMatrixElement), & j = 1, \dots \nTraitsDiscrete ,\\
\latentDataMatrixElement, & j = \nTraitsDiscrete + 1, \dots, \nTraits,
\end{cases}
\end{equation}
where $ \sign(\latentDataMatrixElement)$ takes the value 1 on positive values and -1 on negative values. As a result, latent parameters at the tips and a threshold (that we set to zero without loss of generality) determine the corresponding binary traits, and continuous traits can be seen as directly observed.

Turning our attention to the joint distribution of tip latent parameters $ \latentData $, we can integrate out $ \latentData_{\nTaxa + 1}, \dots, \latentData_{2\nTaxa - 1} $ by assuming a conjugate prior on the tree root, $ \latentData_{2\nTaxa - 1} \sim \normalDistribution{\rootpriorMean}{\rootpriorSamplesize \inverse \traitCovariance}$ with prior mean $\rootpriorMean$ and prior sample size $\rootpriorSamplesize$.
Then $ \latentData $ follows a matrix normal  distribution
\begin{equation}\label{eq:matrixNormalDistri}
\latentData \sim \matrixNormal{\nTaxa}{\nTraits}{\tipMeanMatrix}{\phylogenyVariance}{\traitCovariance},
\end{equation}
where $ \tipMeanMatrix = \left(\rootpriorMean,\dots,\rootpriorMean\right)\transpose $ is an $ \nTaxa \times \nTraits $ mean matrix and the across-taxa tree covariance matrix  $ \phylogenyVariance =  \diffusionVariance +  \rootpriorSamplesize \inverse \oneMatrix$ \citep{pybus2012}. The tree diffusion matrix $ \diffusionVariance $ is a deterministic function of $ \phylogeny $ and $ \oneMatrix $ is an $ \nTaxa \times \nTaxa $ matrix of all ones, such that the term $ \rootpriorSamplesize \inverse \oneMatrix $ comes from the integrated-out tree root prior. Figure \ref{fig:vmatrix} illustrates how the tree structure determines $\diffusionVariance$: the diagonals are equal to the sum of branch lengths from tip to root, and the off-diagonals are equal to the branch length from root to the most recent common ancestor of two tips.
\begin{figure}[h]
	\begin{center}
			\includegraphics[scale=0.35]{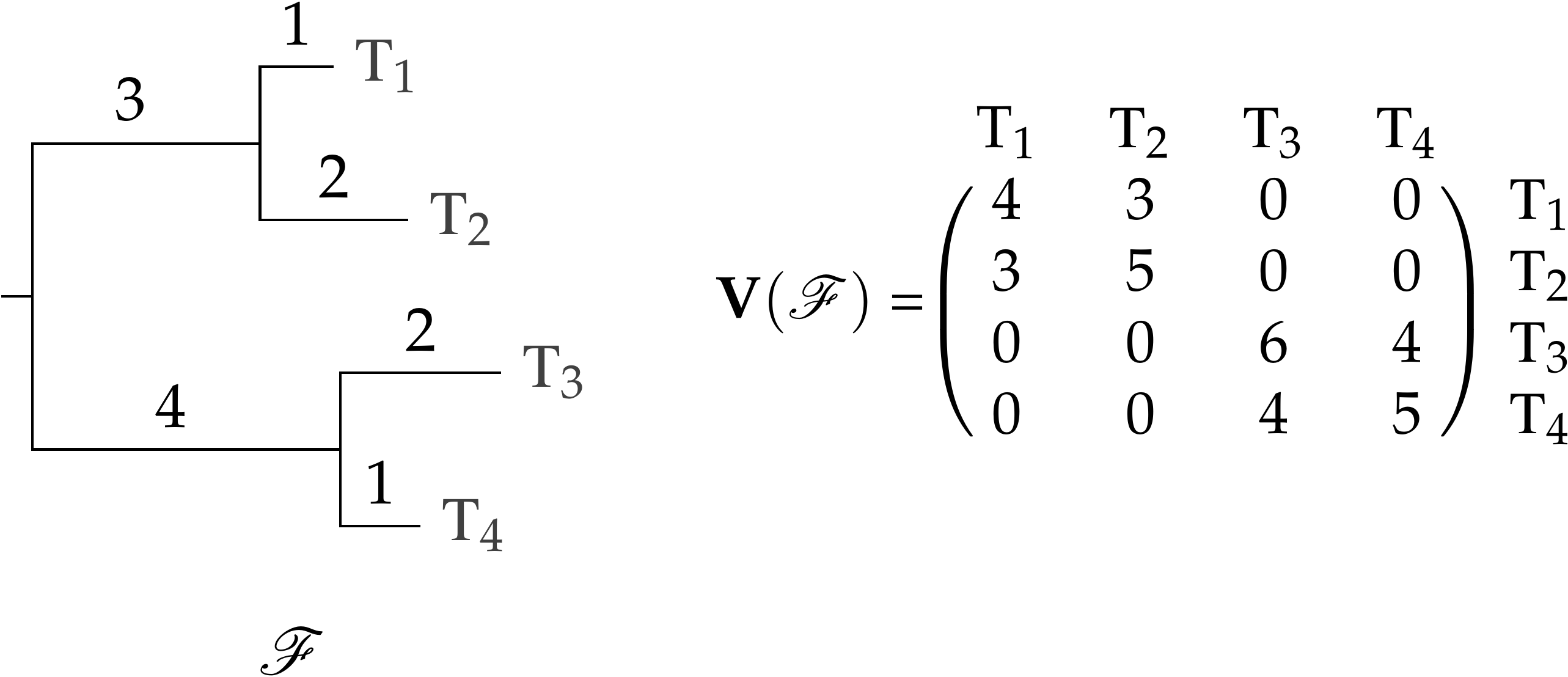}
	\end{center}
	\caption{A 4-taxon phylogenetic tree $\phylogeny$ with tips (T$_1$, T$_2$, T$_3$, T$_4$) and their corresponding tree diffusion matrix $ \diffusionVariance $.}
	\label{fig:vmatrix}
\end{figure}
\newcommand{\cIndicator}[2]{\ensuremath{\mathbb{I}(#1 \,|\,#2)}}
Combining equations \eqref{eq:thresholdFunc} and  \eqref{eq:matrixNormalDistri} enables us to write down the augmented likelihood of $ \latentData $ and $ \observedResponse $ through the factorization 
\begin{equation}\label{eq:3}
\cDensity{\observedResponse, \latentData}{\phylogenyVariance, \traitCovariance,\rootpriorMean, \rootpriorSamplesize, \mappingFunction}  = \cDensity{\observedResponse}{\latentData}  \cDensity{\latentData}{\phylogenyVariance,\traitCovariance,\rootpriorMean, \rootpriorSamplesize},
\end{equation}
where $ \cDensity{\observedResponse}{\latentData} = \cIndicator{\observedResponse}{\latentData, \mappingFunction} $, the indicator function that takes the value 1 if $ \latentData $ are consistent with the observations $ \observedResponse $ and 0 otherwise.
\newcommand{\traitDiagElement}[1]{D_{#1}}
\subsection{Decomposition of trait-covariance to account for varying data scales}\label{sec:DRD}
The previous work of \citet{Cybis2015} uses a conjugate Wishart prior on $\traitPrecision$ for computational convenience.
However, there are two problems with the Wishart prior.
First, with mixed-type data, it leaves the model not parameter-identifiable.
For a binary trait, we only know the sign of its latent parameter; the absolute value is arbitrary.
Consider a latent parameter $ \latentDataMatrixElement $ and its marginal trait variance $ \traitCovariance_{jj} $, the $ j $th diagonal element of $ \traitCovariance $.
If we scale them to $k \latentDataMatrixElement $ and $\traitCovariance_{jj}/k$ by any positive number $ k $, then according to (\ref{eq:matrixNormalDistri}), the likelihood remains unchanged.
Therefore, we need to fix the marginal variances for latent parameters underlying binary traits.
On the other hand, continuous traits can be seen as directly observed latent parameters, and their marginal trait variances depend on the potentially differing rates of change along $ \phylogeny $ and should be inferred from the data.
A Wishart prior on $\traitPrecision$ does not allow such distinct constraints on the marginal variances for binary and continuous traits.
The second problem with the Wishart prior is that strong dependencies exist among correlations and their joint distribution is considerably different from uniform \citep{tokuda2011visual}.
Without knowing the true correlation structure, these prior assumptions may not be appropriate.
Hence, we favor a noninformative, uniform prior on the correlation matrix.

We solve the above problems by decomposing $ \traitCovariance $ into an across-trait correlation matrix and standard deviations, with a jointly uniform prior on the correlation matrix. Specifically, we decompose $\traitCovariance= \traitDiag \traitCorr  \traitDiag$, where $ \traitCorr $ is the $ \nTraits \times \nTraits $ correlation matrix and $  \traitDiag $ is a diagonal matrix with elements $ \traitDiagElement{\nodeIndexOne\nodeIndexOne} = 1, \text{ for } i = 1, \dots, \nTraitsDiscrete$ and $ \traitDiagElement{\nodeIndexOne\nodeIndexOne} = \contTraitScale_\nodeIndexOne > 0 \text{ for } i = \nTraitsDiscrete + 1, \dots, \nTraits$.
We use the prior of Lewandowski, Kurowicka, and Joe (LKJ) on the positive-definite correlation matrix $ \traitCorr $ \citep{lkj2009},
with density
\begin{equation}
\mbox{LKJ}(\traitCorr|\eta) = \lkjdistribution{\traitCorr}{\eta}, 
\end{equation}
where $ \eta > 0$ is a shape parameter and $ c(\eta) $ is the normalizing constant. When $ \eta = 1 $, the LKJ prior implies a uniform distribution over all correlation matrices of dimension $ \nTraits $. For the diagonal standard deviation matrix $ \traitDiag $, we assume independent log normal priors on the variances $\contTraitScale_\nodeIndexOne ^2$ for $i = \nTraitsDiscrete + 1, \cdots, \nTraits$ with mean 0 and variance 1 on the log scale.
We describe how to carry out the posterior inference under this prior in Section~\ref{sec:hmc_for_trait_covariance}. There exists other methods for specifying a prior distribution on $ \traitDiag \traitCorr  \traitDiag$. 
For example, \citet{huang2013simple} use half-t distributions on standard deviations and achieve marginally uniform correlations. 
We prefer log normal priors over half-t because the latter has non-zero probability density for a zero standard deviation.
If one favors half-t standard deviations or marginally uniform correlations, our approach easily adapts to the prior in \citet{huang2013simple}.

\section{Inference}
Primary scientific interest lies in the across-trait correlation matrix $ \traitCorr$.
We integrate out the nuisance parameters by sampling from the joint posterior
\begin{equation}
\begin{aligned}
\cDensity{\traitCorr,\traitDiag, \latentData, \phylogeny}{\observedResponse,\sequenceData} &\propto \cDensity{\observedResponse}{\latentData} \:\times\: \cDensity{\latentData}{\traitCorr,\traitDiag, \phylogeny} \:\times\: \\& \quad  \quad  \density{\traitCorr, \traitDiag} \:\times\: \cDensity{\sequenceData}{\phylogeny} \:\times\: \density{\phylogeny}
\end{aligned}
\end{equation}
via a random-scan Gibbs scheme \citep{liu1995covariance}, and drop the posterior's dependence on the hyper-parameters $(\phylogenyVariance,\rootpriorMean, \rootpriorSamplesize, \mappingFunction)$ to ease notation. 
The joint posterior factorizes because sequences $ \sequenceData $ only affect the parameters of primary interest through $ \phylogeny $, since we assume $ \sequenceData $ to be conditionally independent of other parameters given $ \phylogeny $.

Within the Gibbs scheme, we alternatively update  $\latentData$, $(\traitCorr, \traitDiag)$ and $\phylogeny$ from their full conditionals, taking advantage of the conditional independence structure.
We construct  $ \cDensity{\sequenceData}{\phylogeny}$ from a continuous-time Markov chain evolutionary model \citep{suchard2001bayesian} that describes nucleotide substitutions along the branches of $\phylogeny$ that give rise to $\sequenceData$.
We assume a typical tree prior $\density{\phylogeny}$ based on a coalescent process \citep{kingman1982coalescent} and adopt a random-scan mixture of effective Metropolis-Hastings transition kernels \citep{beast2018} to update parameters that define $ \phylogeny $.
For more details on tree sampling and tree priors choices, we refer interested readers to \citet{beast2018}.
This section focuses on overcoming the scalability bottleneck of updating $\latentData$ from an $\nTaxa \nTraits$-dimensional truncated normal distribution by combining BPS with dynamic programming strategy.
We also describe how we deploy Hamiltonian Monte Carlo (HMC) to update $(\traitCorr, \traitDiag)$ to accommodate the non-conjugate prior on $\traitCovariance = \traitDiag \traitCorr  \traitDiag$.

\subsection{BPS for updating high-dimensional latent parameters}
BPS is a non-reversible ``rejection-free'' sampler originally introduced in the computational physics literature by \cite{peters2012} for simulating particle systems.
\cite{bouchard2018bouncy} later adopted
the algorithm with modifications to better suit statistical applications.
BPS explores a target distribution $ \bpsTarget(\bpsPosition) $ by simulating a piecewise deterministic Markov process.
The simulated particle follows a piecewise linear trajectory, with its evolution governed by the landscape of the \textit{energy} function $U(\bpsPosition) := - \log \bpsTarget(\bpsPosition)$.
To respect the target distribution, classical Monte Carlo algorithms first propose a move, then either accept or reject it such that a move towards areas of low probability
or, equivalently,
of high energy, is more likely to be rejected than one towards areas of high probability.
On the other hand, BPS modifies its particle trajectory via a Newtonian elastic collision against the energy gradient, thereby avoiding wasteful rejected moves.

BPS is an efficient sampler for log-concave target distributions in general, with the additional ability to account for parameter constraints by treating them as ``hard-walls'' against which the particle bounces. Of particular interest to us is the fact that, when the target distribution is a truncated MVN, the critical computation for BPS implementation 
is multiplying the precision matrix of the unconstrained MVN by an arbitrary vector. So BPS becomes an especially efficient approach when one can carry out these matrix-vector operations quickly. In our application, the tree diffusion process only defines the covariance, not the precision. But fortunately, the structured Brownian diffusion process enables us to efficiently compute the precision-vector products without costly matrix inversion.
BPS also allows us to condition on a subset of dimensions that correspond to the continuous traits without extra computation.
We begin with an overview of BPS following \citet{bouchard2018bouncy} and describe how to incorporate parameter constraints \citep{bierkens2018PDMP_with_constraint}; the subsequent sections describe how to optimize the implementation when sampling from a truncated MVN.
\subsubsection{BPS overview}
\label{sec:bps_overview}
To sample from the target distribution $ \bpsTarget(\bpsPosition) $, BPS simulates a particle with position $\bpsPosition(t)$ and velocity $\bpsVelocity(t)$ for time $t \geq 0$, initialized from $\bpsVelocity_{0} \sim \normal(\bm{0}, \I)$ and a given $\bpsPosition_0$ at time $t = 0$.
Over time intervals $t \in [t_k, t_{k+1}]$, the particle follows a piecewise linear path with velocity $ \bpsVelocity(t) = \bpsVelocity_{k} $ and position $\bpsPosition(t) = \bpsPosition_{k} + (t - t_k) \bpsVelocity_{k}$ .
An inhomogeneous Poisson process governs the inter-event times $\withinSegmentTime_{k + 1} = t_{k+1} - t_k$ with rate
\begin{equation}
\lambda(\bpsPosition(t), \bpsVelocity_{k})
= \max\left\{ 0, \left\langle \bpsVelocity_{k}, \nabla U(\bpsPosition(t)) \right\rangle \right\},
\end{equation}
where $ \left\langle\cdot, \cdot\right\rangle$ denotes an inner product.

When the target density is log-concave and differentiable, $U(\bpsPosition)$ is convex, so one can conveniently simulate the Markov process.
We describe how to simulate the process for a pre-specified amount of time $\totalTime > 0$, and the mapping $\bpsPosition_0 \to \bpsPosition(\totalTime)$ defines a Markov transition kernel with $\bpsTarget(\bpsPosition)$ as the stationary density:

\begin{enumerate}
	\item Solve a one-dimensional optimization problem to find
	\begin{equation}
	\label{eq:energy_minimization_prob}
	\withinSegmentTime_{\min} =  \argmin_{\withinSegmentTime \geq 0} U(\bpsPosition_{k - 1} + \withinSegmentTime \bpsVelocity_{k-1})
	\ \text{and} \
	U_{\min} = U(\bpsPosition_{k - 1} + \withinSegmentTime_{\min} \bpsVelocity_{k-1}).
	\end{equation}
	\item Draw $T \sim \text{Exp}(1)$, an exponential random variable with rate 1, and solve for the next inter-event time $ \withinSegmentTime_k $, the minimal root of
	\begin{equation}
	\label{eq:interevent_time}
	U(\bpsPosition_{k-1} + \withinSegmentTime_k \bpsVelocity_{k - 1}) - U_{\min} = T \text{ and } \withinSegmentTime_k > \withinSegmentTime_{\min}.
	\end{equation}
	\item Update $(\bpsPosition, \bpsVelocity)$ as
	\begin{equation}
	\label{eq:bounce_against_grad}
	\bpsPosition_{k} \gets \bpsPosition_{k - 1} + \withinSegmentTime_k \bpsVelocity_{k - 1}, \quad \bpsVelocity_{k} \gets \bpsVelocity_{k - 1} - 2 \frac{\left\langle \bpsVelocity_{k - 1}, \nabla U(\bpsPosition_{k}) \right\rangle}{\left\| \nabla U(\bpsPosition_{k}) \right\|^2 } \nabla U(\bpsPosition_{k}).
	\end{equation}
	\item Stop if $\sum_{j = 1}^{k} \withinSegmentTime_j \geq \totalTime $ and return $\bpsPosition(\totalTime) = \bpsPosition_{k - 1} + (\totalTime - t_{k - 1}) \bpsVelocity_{k - 1}$ where $t_{k - 1} = \sum_{j = 1}^{k - 1} \withinSegmentTime_j$, otherwise repeat Steps 1 - 3.
\end{enumerate}

Steps 1-4 form one conditional update by BPS inside a Gibbs scheme. They are the same as the basic BPS algorithm in \citet{bouchard2018bouncy}, except that we do not include velocity refreshment as random Poisson events.
Since we use BPS for conditional updates, we resample the velocity from $ \normal(\bm{0}, \I) $ at the beginning of every BPS iteration.
BPS without velocity refreshment is known to suffer from reducible behavior when applied to an isotropic multivariate normal distribution \citep{bouchard2018bouncy}.
Our velocity resampling already avoids this reducibility issue, and so we opt not to incorporate further refreshment inside the transition kernel. As long as the entire chain remains irreducible, Peskun-Tierney theory for non-reversible MCMC suggests that adding further events only reduces the efficiency \citep{bierkens2017limit, andrieu2019peskun}.

When the target distribution is constrained to some region $\bpsPosition \in \paramRegion$, the bounce events are caused not only by the gradient $\nabla U(\bpsPosition)$ but also by the domain boundary $\partial \paramRegion$.
We call these bounces ``gradient events" and ``boundary events" respectively.
Whichever occurs first is the actual bounce.
More precisely, we define the boundary event time $\withinSegmentTime_{\textrm{bd}, k} $ as
\begin{equation}
\label{eq:bdry_event_time}
\withinSegmentTime_{\textrm{bd}, k}
	= \inf_{\withinSegmentTime > 0} \left\{ \bpsPosition_{k - 1} + \withinSegmentTime \bpsVelocity_{k - 1} \notin \paramRegion
	\right\}.
\end{equation}
Then the bounce time is given by $\withinSegmentTime_k = \min\{\withinSegmentTime_{\textrm{bd}, k}, \withinSegmentTime_{\textrm{gr}, k}\}$, where $\withinSegmentTime_{\textrm{gr}, k}$ denotes the gradient event time of \eqref{eq:interevent_time}.
If $\withinSegmentTime_{\textrm{bd}, k} < \withinSegmentTime_{\textrm{gr}, k}$, we have a boundary bounce and the position is updated as in \eqref{eq:bounce_against_grad} while the velocity is updated as
\begin{equation}
\label{eq:bounce_against_bdry}
\bpsVelocity_{k}
	\gets \bpsVelocity_{k - 1}
	- 2 \left\langle \bpsVelocity_{k - 1}, \bm{\nu} \right\rangle \bm{\nu},
\end{equation}
where $\bm{\nu} = \bm{\nu}(\bpsPosition_k)$ is a unit vector orthogonal to the boundary at $\bpsPosition_k \in \partial \paramRegion$.

\subsubsection{BPS for truncated MVNs}
We now describe how the BPS simulation simplifies when the target density is a $d$-dimensional truncated MVN of the form
\begin{equation}
\label{eq:bps_targetDistribution_tMVN}
	\bpsPosition \sim \normalDistribution{\tipMeanMatrixVec}{\grandVariance}\text{ subject to } \bpsPosition \in \paramRegion = \{\sign(\bpsPosition) = \observedResponseVec \}
	\ \text{ for } \observedResponseVec \in \{\pm 1\}^d.
\end{equation}
Importantly, we can implement BPS so that, aside from basic and computationally inexpensive operations, it relies solely on matrix-vector multiplications by the precision matrix $\bpsPrecision = \grandPrecision$.
Moreover, under the orthant constraint $\{ \sign(\bpsPosition) = \observedResponseVec \}$, we can handle a bounce against the boundary in a particularly efficient manner, only requiring access to a column of $\bpsPrecision$.

We start with gradient events and then describe how to find boundary event times. Now $ U(\bpsPosition) = - \log p(\bpsPosition) = \frac{1}{2} (\bpsPosition - \tipMeanMatrixVec)^\intercal \bpsPrecision (\bpsPosition - \tipMeanMatrixVec) + \constant$ where constant $ \constant $ does not depend on $ \bpsPosition $, therefore
\begin{multline}
\label{eq:truncated_normal_energy}
U(\bpsPosition + \withinSegmentTime \bpsVelocity)
	= \frac{1}{2} \langle \bpsVelocity, \bpsPhiv \rangle \withinSegmentTime^2
	+ \langle \bpsVelocity, \bpsPhix \rangle \withinSegmentTime
	+ \frac{1}{2} \langle \bpsPosition - \tipMeanMatrixVec, \bpsPhix \rangle + \constant \\
	\ \text{ where } \ \bpsPhiv = \bpsPrecision \bpsVelocity
	\ \text{ and } \ \bpsPhix = \bpsPrecision (\bpsPosition - \tipMeanMatrixVec) = \nabla U(\bpsPosition).
\end{multline}
The solution to the optimization problem \eqref{eq:energy_minimization_prob} is given by
\begin{equation}
\label{eq:truncated_normal_minimizer}
\begin{aligned}
\withinSegmentTime_{\min}
	&= \max\left\{0,  - \langle \bpsVelocity, \bpsPhix \rangle \big/ \langle \bpsVelocity, \bpsPhiv \rangle \right\}, \\
U_{\min}
	&= \frac{1}{2} \langle \bpsVelocity, \bpsPhiv \rangle \withinSegmentTime_{\min}^2
		+ \langle \bpsVelocity, \bpsPhix \rangle \withinSegmentTime _{\min}
		+ \frac{1}{2} \langle \bpsPosition - \tipMeanMatrixVec, \bpsPhix \rangle + \constant.
\end{aligned}
\end{equation}
It follows from \eqref{eq:truncated_normal_energy} that the gradient event time in \eqref{eq:interevent_time} coincides with the larger root of the quadratic equation  $a \withinSegmentTime^2 + b \withinSegmentTime + c = 0$ with
\begin{gather*}
a =  \frac{1}{2} \langle \bpsVelocity, \bpsPhiv \rangle, \
	b = \langle \bpsVelocity, \bpsPhix \rangle,
	\text{ and } \,
c
	= - \frac{1}{2} \langle \bpsVelocity, \bpsPhiv \rangle \withinSegmentTime_{\min}^2
		- \langle \bpsVelocity, \bpsPhix \rangle \withinSegmentTime _{\min} - T,
\end{gather*}
so
\begin{equation*}
\withinSegmentTime_{\textrm{gr}}
	= \frac{-b + \sqrt{b^2 - 4 a c}}{2a}.
\end{equation*}
When a gradient event takes place, the position and velocity are updated according to \eqref{eq:bounce_against_grad} with
\begin{equation}
\label{eq:gradient_update_for_truncated_normal}
 \nabla U(\bpsPosition + \withinSegmentTime \bpsVelocity)
	= \bphi_{\bpsPosition + \withinSegmentTime \bpsVelocity}
	= \bpsPrecision (\bpsPosition - \tipMeanMatrixVec) + \withinSegmentTime \bpsPrecision \bpsVelocity
	= \bpsPhix + \withinSegmentTime \bpsPhiv.
\end{equation}
Note that $\bphi_{\bpsPosition + \withinSegmentTime \bpsVelocity}$ can be computed by an element-wise addition of $\bpsPhix$ and $\withinSegmentTime \bpsPhiv$, rather than the expensive matrix-vector operation $\bpsPosition + \withinSegmentTime \bpsVelocity \to \bpsPrecision (\bpsPosition + \withinSegmentTime \bpsVelocity )$.

The orthant boundary is given by $\cup_i \{x_i = 0\}$.
When $\sign(x_i) = \sign(v_i)$, where $x_i$ and $v_i$ denotes the $ i $-th coordinate of particle position and velocity, the particle is moving away from the $i$-th coordinate boundary $\{x_i = 0\}$ and thus never reaches it.
Otherwise, the coordinate boundary is reached at time $\withinSegmentTime = | x_i / v_i |$.
Hence $\withinSegmentTime_{\textrm{bd}}$ can be expressed as
\begin{equation*}
\withinSegmentTime_{\textrm{bd}}
	= \left| x_{i_\textrm{bd}} / v_{i_\textrm{bd}} \right|, \
{i_\textrm{bd}} = \textstyle \argmin_{\, i \in I} \left| x_i / v_i \right|
	\, \text{ for } \,
	I = \{i : x_i v_i < 0 \}.
\end{equation*}
When a boundary event takes place, the particle bounces against the plane orthogonal to the standard basis vector $\bm{\nu} = \basisVector{i_\textrm{bd}}$.
As the updated velocity takes the form $\bpsVelocity^* \gets \bpsVelocity - 2 v_{i_\textrm{bd}} \bm{e}_{i_\textrm{bd}}$, we can save computational cost of simulating the next line segment by realizing that
\begin{equation}
\label{eq:efficient_precision_velocity_update}
\bphi_{\bpsVelocity^*} = \bpsPrecision \bpsVelocity^*
	= \bpsPhiv + 2 v^*_{i_\textrm{bd}} \bpsPrecision \bm{e}_{i_\textrm{bd}}
	\ \text{ where } \  v^*_{i_\textrm{bd}} = - v_{i_\textrm{bd}}.
\end{equation}
In other words, we can compute $\bphi_{\bpsVelocity^*}$ by simply extracting the $i_\textrm{bd}$-th column of $\bpsPrecision$ and updating $\bpsPhiv$ with an element-wise addition.
This avoids the expensive matrix-vector operation $\bpsVelocity^* \to \bpsPrecision \bpsVelocity^*$.

Algorithm~\ref{alg:bps_conditional} describes BPS implementation for truncated MVNs based on the discussion above, with the most critical calculations optimized.
Within each line segment, $\bpsPhix$ and $\bpsPhiv$ once efficiently computed (Section \ref{sec:dynamicProgramming}) can be re-used throughout. In our application the observed continuous traits correspond to fixed dimensions in $ \bpsPosition $, so we slightly modify the BPS such that it can sample from a conditional truncated MVN.
Specifically, we partition $ \bpsPosition = \left( \bpsPosition_b , \bpsPosition_c \right)$
by latent ($ \bpsPosition_b $) and observed dimensions ($ \bpsPosition_c  $), with the aim to generate samples from the conditional distribution $ \cDensity{\bpsPosition_b}{\bpsPosition_c}$  (details in Appendix \ref{sec:conditional tMVN}). We choose the tuning parameter $ \totalTime $ based on a heuristic that works well in practice (Section \ref{sec:tuningBPS}).
\algnewcommand{\LeftComment}[1]{\Statex \(\triangleright\) #1}

\begin{algorithm}[h!]
	\caption{Bouncy particle sampler for multivariate truncated normal distributions}
	\label{alg:bps_conditional}
	\begin{algorithmic}[1]
		\Require $\totalTime, \text{initial value for } \bpsPosition$
		\State $\bpsVelocity \sim \normal(\bm{0}, \I)$
		\State $\bm{\varphi}_{\bpsPosition} \gets \bpsPrecision (\bpsPosition - \tipMeanMatrixVec) $
		\Comment{$\bm{\varphi}_{\bpsPosition} = \nabla U(\bpsPosition)$ is the gradient of energy}
		\While{$\totalTime > 0$}
		\vspace{2mm}
		\LeftComment{compute reused quantities once}
		\If{previous bounce is a boundary event at coordinate $ i $}
			\State $\bm{\varphi}_{\bpsVelocity} \gets \bm{\varphi}_{\bpsVelocity} +  2 v_{i} \bpsPrecision \basisVector{i}$
		\Else
			\State $\bm{\varphi}_{\bpsVelocity} \gets \bpsPrecision \bpsVelocity $ \Comment{the expensive step}
		\EndIf
		\State $\varphi_{\bpsVelocity, \bpsPosition} \gets \bpsVelocity^\intercal \bm{\varphi}_{\bpsPosition},
		\varphi_{\bpsVelocity, \bpsVelocity} \gets \bpsVelocity^\intercal \bm{\varphi}_{\bpsVelocity}$
		\vspace{2mm}
		\LeftComment{find gradient event time}
		\State $\withinSegmentTime_{\rm min} \gets \max\left\{ 0, - \varphi_{\bpsVelocity, \bpsPosition} / \varphi_{\bpsVelocity, \bpsVelocity} \right\}$
		\State $T \sim \text{Exp}(1)$
		\State $
			a \gets \frac{1}{2} \varphi_{\bpsVelocity, \bpsVelocity}, \
			b \gets \varphi_{\bpsVelocity, \bpsPosition}, \
			c \gets - \frac{1}{2} \withinSegmentTime_{\min}^2 \varphi_{\bpsVelocity, \bpsVelocity} - \withinSegmentTime_{\min} \varphi_{\bpsVelocity, \bpsPosition} - T
		$
		\State $\gradientTime \gets (- b + \sqrt{b^2 - 4 a c}) / (2 a)$
		
		\vspace{2mm}
		\LeftComment{find truncation event time at coordinate $ i $}
		\State $\reflectTime \gets \argmin_i x_{\nodeIndexOne} /  v_\nodeIndexOne, \text{ for } \nodeIndexOne \text{ with }x_{\nodeIndexOne}  v_\nodeIndexOne < 0 $
		
		\vspace{2mm}
		\LeftComment{bounce happens}
		\State $\bounceTime  \gets \min \left\{ \gradientTime, \reflectTime, \totalTime \right\} $
		\State $\bpsPosition \gets \bpsPosition + \bounceTime \bm{v}$,
		$\bm{\varphi}_{\bpsPosition} \gets \bm{\varphi}_{\bpsPosition} + \bounceTime \bm{\varphi}_{\bpsVelocity}$
		\If{$  \bounceTime = \reflectTime$}
		\State $v_i \gets -v_i$
		\ElsIf{$ \bounceTime = \gradientTime$}
			\State $\bpsVelocity \gets \bpsVelocity - (2 \left\langle \bpsVelocity, \bm{\varphi}_{\bpsPosition} \right\rangle \big/ \| \bm{\varphi}_{\bpsPosition} \|^2) \, \bm{\varphi}_{\bpsPosition}$
		\EndIf
				\State $\totalTime \gets \totalTime - \bounceTime$
		\EndWhile
	\end{algorithmic}
\end{algorithm}

\subsubsection{Dynamic programming strategy to overcome computational bottleneck} 
\label{sec:dynamicProgramming}
A straight implementation of BPS remains computationally challenging, as computing $\bpsPhix$ and $\bpsPhiv$ in Algorithm \ref{alg:bps_conditional} involves a high-dimensional matrix inverse when the model is parameterized in terms of $\grandVariance$.
From \eqref{eq:matrixNormalDistri} and the equivalence between matrix normal  and multivariate normal distributions, to sample latent parameters $ \latentData $ from their conditional posterior, the target distribution \eqref{eq:bps_targetDistribution_tMVN} specifies as $ \latentDataVec = \vectorize{\latentData}$, $ \tipMeanMatrixVec = \vectorize{\tipMeanMatrix}$, $ \grandVariance = \grandVarianceExpression $, and $ \observedResponseVec = \vectorize{\observedResponse} $, where $ \vectorize{\cdot} $ is the vectorization that converts an $ \nTaxa \times \nTraits $ matrix into an $ \nTaxa\nTraits \times 1 $ vector and $ \kronecker$ denotes the Kronecker product.
A naive matrix inverse operation $\grandPrecision = \grandPrecisionExpression$ has an intimidating complexity of $ \order{\nTaxa^3 + \nTraits^3} $.
If we have a fixed tree, such that $\phylogenyPrecision$ is known, the typical computation proceeds via
\begin{equation} \begin{aligned}\label{eq:kroneckerEq}
\grandVariance\inverse \left(\latentDataVec - \tipMeanMatrixVec \right)  =
\left( \grandPrecisionExpression \right)
\left(\latentDataVec - \tipMeanMatrixVec \right)
&= \vectorize{ \phylogenyPrecision \left(\latentData - \tipMeanMatrix\right) \traitPrecision },
\end{aligned} \end{equation}
with a cost ${\cal O}\hspace{-0.1em}\left( \nTaxa^2 \nTraits +\nTaxa \nTraits^2\right)$.
When the tree is random, the $ \order{\nTaxa^3} $ cost to get $ \phylogenyPrecision $ seems unavoidable. However, we show that even with a random tree, evaluating  $\bpsPhix$ and $\bpsPhiv$ can be $\order{ \nTaxa \nTraits^2}$.
We use conditional densities to evaluate these products (Proposition \ref{proposition1}) and obtain all conditional densities simultaneously via a dynamic programming strategy that avoids explicitly inverting $ \phylogenyVariance $.
\begin{prop} \label{proposition1}
	Given joint variance matrix $\grandVariance$ and vectorized latent data $\latentDataVec$, the energy gradient $ \nabla U(\bpsPosition) $ is
	\begin{equation} \label{linear}
	\bpsPhix = \grandVariance\inverse \left(\latentDataVec  - \tipMeanMatrixVec \right)  =
	\left(
	\begin{array}{c}
	\nodePrePrecision{1}\left( \latentData_1 - \nodePreMean{1} \right) \\
	\vdots \\
	\nodePrePrecision{\nTaxa}\left( \latentData_{\nTaxa} - \nodePreMean{\nTaxa}
	\right)
	\end{array} \right),
	\end{equation}
	where $\nodePreMean{\nodeIndexOne}$ and $\nodePrePrecision{\nodeIndexOne}$ are the mean  and the precision matrix of the distributions  $\cDensity{\latentData_{\nodeIndexOne}}{\latentData_{\excludeNode{\nodeIndexOne}}}$ for $\nodeIndexOne = 1,\ldots,\nTaxa$, and $\cDensity{\latentData_{\nodeIndexOne}}{\latentData_{\excludeNode{\nodeIndexOne}}}$ is the conditional distribution of latent parameters at one tree tip given those of all the other tips.
	\label{conditionalFactorization}
\end{prop}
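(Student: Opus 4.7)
The plan is to prove the identity block-by-block by comparing two expressions for the partial gradient of the joint log-density $\log \density{\latentDataVec}$ with respect to each $\latentData_\nodeIndexOne$, using the conditional factorization of the joint multivariate normal.

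First I would note that $\latentDataVec \sim \normalDistribution{\tipMeanMatrixVec}{\grandVariance}$ has joint log-density (up to an additive constant not depending on $\latentDataVec$)
\begin{equation*}
\log \density{\latentDataVec} = -\tfrac{1}{2}\left(\latentDataVec - \tipMeanMatrixVec\right)\transpose \grandVariance\inverse \left(\latentDataVec - \tipMeanMatrixVec\right),
\end{equation*}
so the gradient with respect to the full vector is $\nabla \log\density{\latentDataVec} = - \grandVariance\inverse(\latentDataVec - \tipMeanMatrixVec)$. Partitioning $\latentDataVec = (\latentData_1\transpose, \ldots, \latentData_\nTaxa\transpose)\transpose$ into $\nTaxa$ blocks of size $\nTraits$, the $\nodeIndexOne$-th block of $\nabla \log\density{\latentDataVec}$ is exactly the $\nodeIndexOne$-th block of the right-hand side of \eqref{linear} up to sign, namely $-[\grandVariance\inverse(\latentDataVec - \tipMeanMatrixVec)]_\nodeIndexOne$.

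Next I would use the chain-rule factorization $\density{\latentDataVec} = \cDensity{\latentData_\nodeIndexOne}{\latentData_{\excludeNode{\nodeIndexOne}}}\,\density{\latentData_{\excludeNode{\nodeIndexOne}}}$, take logarithms, and differentiate with respect to $\latentData_\nodeIndexOne$. Since $\density{\latentData_{\excludeNode{\nodeIndexOne}}}$ does not depend on $\latentData_\nodeIndexOne$, its partial gradient vanishes, leaving
\begin{equation*}
\nabla_{\latentData_\nodeIndexOne} \log \density{\latentDataVec} = \nabla_{\latentData_\nodeIndexOne} \log \cDensity{\latentData_\nodeIndexOne}{\latentData_{\excludeNode{\nodeIndexOne}}}.
\end{equation*}
Because all subvectors of a jointly Gaussian vector are themselves Gaussian with Gaussian conditionals, $\cDensity{\latentData_\nodeIndexOne}{\latentData_{\excludeNode{\nodeIndexOne}}} = \normalDistribution{\nodePreMean{\nodeIndexOne}}{\nodePrePrecision{\nodeIndexOne}\inverse}$ by the very definitions of $\nodePreMean{\nodeIndexOne}$ and $\nodePrePrecision{\nodeIndexOne}$ in the statement, so the right-hand side equals $-\nodePrePrecision{\nodeIndexOne}(\latentData_\nodeIndexOne - \nodePreMean{\nodeIndexOne})$. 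Equating the two expressions for $\nabla_{\latentData_\nodeIndexOne}\log\density{\latentDataVec}$ block-by-block and stacking over $\nodeIndexOne = 1, \dots, \nTaxa$ yields \eqref{linear}.

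There is essentially no hard step: the identity is a purely algebraic consequence of factoring the joint density into conditional and marginal pieces, combined with the fact that Gaussian marginals of Gaussians are Gaussian. The only thing to be careful about is matching dimensions when writing the block decomposition, and noting that the result holds for \emph{any} joint covariance $\grandVariance$ (i.e., the argument does not use the Kronecker structure $\grandVarianceExpression$ and hence applies whether or not the tree is fixed). The practical content of the proposition is therefore not the identity itself but the fact that it recasts the matrix-vector product $\grandVariance\inverse(\latentDataVec - \tipMeanMatrixVec)$ as a collection of $\nTaxa$ conditional moment computations; the subsequent dynamic programming will exploit the tree to obtain all $\{(\nodePreMean{\nodeIndexOne}, \nodePrePrecision{\nodeIndexOne})\}$ simultaneously in $\order{\nTaxa\nTraits^2}$ time.
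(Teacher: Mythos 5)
Your proof is correct and takes essentially the same route as the paper's: compute $\nabla \log \density{\latentDataVec}$ once directly as $-\grandVariance\inverse(\latentDataVec - \tipMeanMatrixVec)$ and once block-by-block via the factorization $\density{\latentDataVec} = \cDensity{\latentData_\nodeIndexOne}{\latentData_{\excludeNode{\nodeIndexOne}}}\,\density{\latentData_{\excludeNode{\nodeIndexOne}}}$, then equate. (Incidentally, your version avoids the spurious factor of $\tfrac{1}{2}$ that appears --- harmlessly, since it cancels --- on both sides of the paper's displayed gradients.)
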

\begin{proof}
	$\latentDataVec \sim \normalDistribution{\tipMeanMatrixVec}{\grandVariance}$, so $\cDensity{\latentData_{\nodeIndexOne}}{\latentData_{\excludeNode{\nodeIndexOne}}}$ are also multivariate normal. 
	Note that
	\begin{equation}\begin{aligned}
	\gradient{\latentDataVec} \left[ \log \density{\latentDataVec} \right]
	&= - \frac{1}{2} \grandVariance\inverse \left(\latentDataVec - \tipMeanMatrixVec \right).
	\label{lhsGradient}
	\end{aligned}\end{equation}
	Likewise, $
	\gradient{\latentDataVec} \left[ \log \density{\latentDataVec} \right] =
	\left(
	\gradient{\latentData_{1}} \left[ \log \density{\latentDataVec} \right], \ldots, \gradient{\latentData_{N}} \left[ \log \density{\latentDataVec} \right]
	\right)\transpose
	$ with
	\begin{equation}\begin{aligned}
	\gradient{\latentData_{\nodeIndexOne}} \left[ \log \density{\latentDataVec} \right] &=
	\gradient{\latentData_{\nodeIndexOne}} \left[
	\log \cDensity{\latentData_{\nodeIndexOne}}{\latentData_{\excludeNode{\nodeIndexOne}}} +
	\log \density{\latentData_{\excludeNode{\nodeIndexOne}}}
	\right] \\
	&= \gradient{\latentData_{\nodeIndexOne}} \left[
	\log \cDensity{\latentData_{\nodeIndexOne}}{\latentData_{\excludeNode{\nodeIndexOne}}} \right] \\
	&= - \frac{1}{2} \nodePrePrecision{\nodeIndexOne}\left( \latentData_{\nodeIndexOne} - \nodePreMean{\nodeIndexOne} \right).
	\label{rhsGradient}
	\end{aligned}\end{equation}
	Equating \eqref{lhsGradient} and \eqref{rhsGradient} completes the proof.
\end{proof}
In Proposition \ref{proposition1}, the partition is by taxon, but we can generalize to any arbitrary partitioning of the dimensions.
By replacing $ \latentDataVec  - \tipMeanMatrixVec $ with $ \bpsVelocity $ (or $ \basisVector{i} $), we achieve a similar result for $ \bpsPhiv$ (or $ \bpsPrecision\basisVector{i}$).
Given $\nodePreMean{\nodeIndexOne}$ and $\nodePrePrecision{\nodeIndexOne}$,
the $\order{ \nTaxa \nTraits^2}$ matrix-vector operation $\bpsVelocity^* \to \bpsPrecision \bpsVelocity^*$ based on Proposition \ref{proposition1} is generally required for updating $\bphi_{ \bpsVelocity^*}$, but for boundary bounces, we can exploit \eqref{eq:efficient_precision_velocity_update} and update
$\bphi_{ \bpsVelocity^*}$ in $\order{ \nTaxa \nTraits}$. For the conditional posterior distribution in our HIV application (Section~\ref{sec:application}), boundary bounces occur far more frequently than gradient ones and thus the efficient update via \eqref{eq:efficient_precision_velocity_update} leads to further significant speed-up.

Fortunately, we are able to efficiently compute $\nodePreMean{\nodeIndexOne}$ and $\nodePrePrecision{\nodeIndexOne}$ through a dynamic programming strategy that recursively traverses the tree \citep{pybus2012} and enjoys a complexity of $\order{\nTaxa \nTraits}$. Here we give the results and omit the derivatives found in \citet{pybus2012} and \citet{Cybis2015}.

The recursive traversals visit every node first in post-order (child $\rightarrow$ parent) and then again in pre-order (parent $\rightarrow$ child) to calculate partial data likelihoods that lead to $\nodePreMean{\nodeIndexOne}$ and $\nodePrePrecision{\nodeIndexOne}$. The post-order traversal begins at a tip and ends at the root, while pre-order starts at the root and reaches every tip. The following results are in terms of the node triplets $(\treeChildOne, \treeChildTwo,\treeParent)$ where $\text{pa}(\treeChildOne) = \text{pa}(\treeChildTwo) = \treeParent$ as in Figure \ref{fig:traversalTree}. We define $\below{\treeChildOne}$ as the tree tips that are descendants to or include (``below'') node $ \treeChildOne$ and $\above{\treeChildOne}$ as the tree tips that are not descendants to (``above'') node $ \treeChildOne $.

\begin{figure}[h]
	\centering
	\includegraphics[scale=0.6]{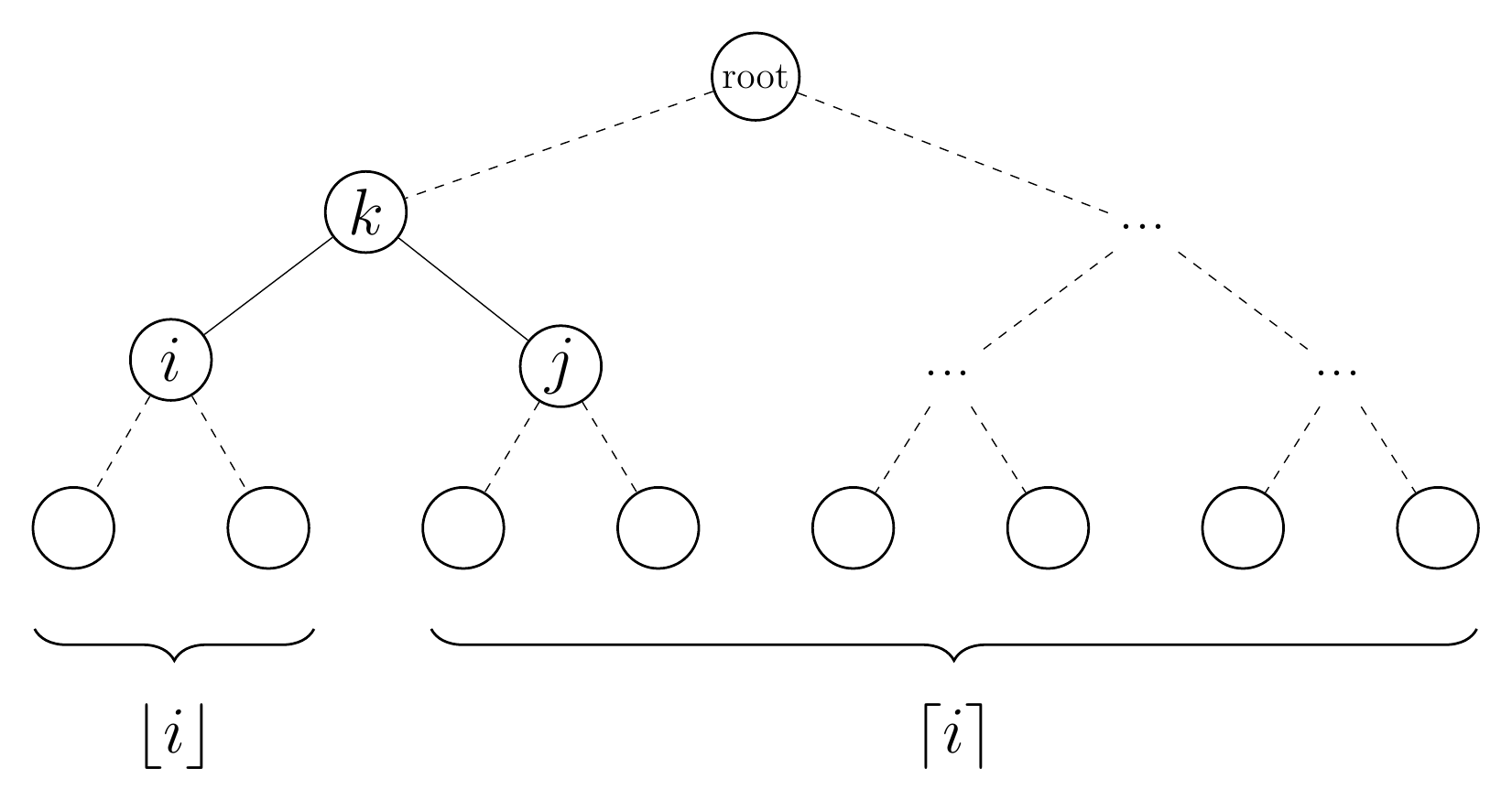}
	\caption{A sample tree to illustrate post- and pre- traversals for efficiently computing $\cDensity{\latentData_{\nodeIndexOne}}{\latentData_{\excludeNode{\nodeIndexOne}}}$. In the triplet $(\treeChildOne, \treeChildTwo, \treeParent)$, parent node $\treeParent$ has two children  $\treeChildOne$ and $\treeChildTwo$.
	We group the tip nodes into two disjoint and exhaustive classes: $\below{\treeChildOne} = $ tree tips that are descendants to or include node $ \treeChildOne$ and $\above{\treeChildOne} = $ tree tips that are not descendants to $ \treeChildOne $. }
	\label{fig:traversalTree}
\end{figure}

During the post-order traversal, the partial likelihoods of the data $\latentData_{\below{\treeChildOne}}$ given latent $\latentData_{\treeChildOne}$ is proportional to a MVN density of $\latentData_\treeChildOne$, in terms of a post-order mean $ \nodeMean{\treeChildOne} $ and variance $ \postOrderScalar{\treeChildOne}\traitCovariance $ \citep{pybus2012}, that is,
\begin{equation}
\cDensity{\latentData_{\below{\treeChildOne}}}{\latentData_\treeChildOne} \propto \mvn{\latentData_\treeChildOne}{\nodeMean{\treeChildOne}}{\postOrderScalar{\treeChildOne}\traitCovariance}.
\end{equation}
We re-employ these quantities shortly in the pre-order traversal.  At the tree tips, $ \nodeMean{\treeChildOne} = \latentData_\treeChildOne$ and the variance scalar $ \postOrderScalar{\treeChildOne} = 0$. For internal nodes,
\begin{equation}\label{eq:postorderScalar}
\begin{aligned}
\nodeMean{\treeParent} & = \postOrderScalar{\treeParent}\left[\left(\postOrderScalar{\treeChildOne} + \branchLength{\treeChildOne}\right)\inverse \nodeMean{\treeChildOne} +  \left(\postOrderScalar{\treeChildTwo} + \branchLength{\treeChildTwo}\right)\inverse\nodeMean{\treeChildTwo}\right], \text{ with} \\
\postOrderScalar{\treeParent} & =
\left[\left(\postOrderScalar{\treeChildOne} + \branchLength{\treeChildOne}\right)\inverse +  \left(\postOrderScalar{\treeChildTwo} + \branchLength{\treeChildTwo}\right)\inverse\right]\inverse.
\end{aligned}
\end{equation}
Similarly, for the pre-order traversal, we calculate the conditional density of $\latentData_{\treeChildOne}$ at node $ \nodeIndexOne $ given the data above it,
\begin{equation}
\cDensity{\latentData_{\treeChildOne}}{\latentData_{\above{\treeChildOne}}} \propto \mvn{\latentData_{\treeChildOne}}{\nodePreMean{\treeChildOne}}{\preOrderScalar{\treeChildOne}\traitCovariance},
\end{equation}
in terms of a pre-order mean $ \nodePreMean{\treeChildOne} $ and variance $ \preOrderScalar{\treeChildOne}\traitCovariance $. Starting from the root where $ \preOrderScalar{2\nTaxa - 1}= \rootpriorSamplesize\inverse $ and $ \nodePreMean{2\nTaxa - 1} =  \rootpriorMean$, the traversal proceeds via
\begin{equation}\label{eq:preorderScalar}
\begin{aligned}
\nodePreMean{\treeChildOne} &=  \preOrderScalar{\treeChildOne}^*\left[\left(\postOrderScalar{\treeChildTwo} + \branchLength{\treeChildTwo}\right)\inverse \nodeMean{\treeChildTwo} +  \preOrderScalar{\treeParent}\inverse\nodePreMean{\treeParent}\right], \text{ with} \\
\preOrderScalar{\treeChildOne}^* &=
\left[\left(\postOrderScalar{\treeChildTwo} + \branchLength{\treeChildTwo}\right)\inverse +  \preOrderScalar{\treeParent}\inverse\right] \inverse, \text{ and } \\
\preOrderScalar{\treeChildOne} &=
\preOrderScalar{\treeChildOne} ^*+ \branchLength{\treeChildOne}.
\end{aligned}
\end{equation}
When reaching the tips where $ \above{\nodeIndexOne} =  \protect\excludeNode{\nodeIndexOne}$, we obtain both the desired conditional mean $ \nodePreMean{\nodeIndexOne}$ and precision $ \nodePrePrecision{\nodeIndexOne} = \left( \preOrderScalar{\treeChildOne} \traitCovariance \right)\inverse$.

For both pre- and post-order traversals, at each node we require $\order{\nTraits}$ elementary operations
to obtain the mean vector and variance scalar; so, visiting all the nodes costs $ \order{\nTaxa \nTraits} $. With $\nodePreMean{\nodeIndexOne}$ and $\nodePrePrecision{\nodeIndexOne} $ for $\nodeIndexOne = 1, \ldots, \nTaxa$ ready in hand, the computation in \eqref{linear} remains $ \order{\nTaxa \nTraits^2} $.
\subsection{Hamiltonian Monte Carlo for updating trait covariance components}
\label{sec:hmc_for_trait_covariance}
The across-trait covariance components $ \traitCorr $ and $ \traitDiag $ have complex and high-dimensional full conditional distributions, with no obvious structure to admit sampling via specialized algorithms.
We therefore rely on HMC \citep{hmcneal}, a state-of-the-art general purpose sampler.
HMC only requires evaluations of the log-density and its gradient, yet is capable of sampling efficiently from complex high-dimensional distributions \citep{bda13}.

To introduce the main ideas behind HMC, we denote the distribution of interest by $p(\btheta) = \cDensity{\traitCorr, \traitDiag}{\latentData, \phylogeny}$.
In order to sample from $\btheta = \left( \traitCorr, \traitDiag \right)$, HMC introduces an auxiliary \textit{momentum} variable $\momentum \sim \mathcal{N}(\bm{0}, \M)$ and samples from the product density $p(\btheta, \momentum) = p(\btheta) p(\momentum)$.
HMC explores the joint space $(\btheta, \momentum)$ by approximating Hamiltonian dynamics that evolve according to the differential equation:
\begin{equation}
\label{eq:hamiltons_equation}
\frac{{\rm d} \btheta}{{\rm d} t}
=  \momentum, \quad
\frac{{\rm d} \momentum}{{\rm d} t}
= \nabla \log p(\btheta).
\end{equation}
More precisely, each HMC iteration proceeds as follows. We first draw a new value of $\momentum$ from its marginal distribution, then we approximate the evolution in \eqref{eq:hamiltons_equation} from time $t = 0$ to $t = \tau$ by applying $L = \lfloor \tau / \stepsize \rfloor$ steps of the \textit{leapfrog} update with stepsize $\epsilon$:
\begin{equation}
\momentum  \gets \momentum + \frac{\stepsize}{2} \nabla_{\btheta} \log p(\btheta), \quad
\btheta	\gets \btheta + \stepsize  \momentum, \quad
\momentum \gets \momentum + \frac{\stepsize}{2} \nabla_{\btheta} \log p(\btheta).
\end{equation}
The end point of the approximated dynamics constitutes a valid \textit{Metropolis} proposal \citep{metropolis53} that is accepted or rejected according to the standard acceptance probability formula.

By virtue of the properties of Hamiltonian dynamics, the HMC proposals generated above can be far away from the current state yet be accepted with high probability.
Good performance of HMC depends critically on well-calibrated choices of $ L $ and $ \epsilon $.
We automate these choices via the stochastic optimization approach of \cite{andrieu08} and the \textit{No-U-Turn} algorithm of \cite{hoffman2014nuts} that have been shown to achieve performance competitive with manually optimized HMC.
Because HMC applies most conveniently to a distribution without parameter constraints, we map $ \traitCorr $ and $ \traitDiag $ to an unconstrained space using standard transformations \citep{stan18}.

\section{Application on HIV immune escape}
\label{sec:application}
\subsection{Background}\label{sec:application_background}
As a rapidly evolving RNA virus, HIV-1 has established extensive genetic diversity that researchers classify into different major groups and, for HIV-1 group M, into different subtypes \citep{Hemelaar:2012lz}.
Such diversity implies that phenotypic traits can vary remarkably among strains circulating in different patients. Differences in viral virulence and their determinants, together with host factors, may explain the large variability in disease progression rates among patients. On the host side, human leukocyte antigen (HLA) class I alleles are important determinants of immune control that are known to be associated with differential HIV disease outcomes, with particular HLA alleles offering considerable protective effect \citep{goulder2012hiv}. An interesting phenomenon is that HIV-1 can evolve to escape the HLA-mediated immune response, but the responsible escape mutations may compromise fitness and hence reduce viral virulence \citep{nomura2013significant,Payne2014}.
Identifying these mutations and their effect on virulence while controlling for the evolutionary relationships among the viruses that spread in populations with heterogeneous HLA backgrounds represents a particular challenge. Here, we address this by estimating the posterior distribution of across-trait correlation while controlling for the unknown viral evolutionary history.

We analyze a data set of $ \nTaxa = 535$ aligned HIV-1 \textit{gag} gene sequences collected from 535 patients in Botswana and South Africa between 2003 and 2010 \citep{Payne2014}.
Both countries are severely affected by the subtype C variant of HIV-1 group M.
Each sequence is associated with a known sampling date and phenotypic measurements, including $ \nTraitsCont = 3 $ continuous traits that are replicative capacity (RC), viral load (VL), and cluster of differentiation 4 (CD4) cell count. An increasing VL and a decreasing CD4 count in the asymptomatic stage characterize a typical HIV infection; RC is a viral fitness measure obtained by an assay that, in this case, assesses the growth rate of recombinant viruses containing the patient-specific \textit{gag-protease} gene relative to a control virus \citep{Payne2014}.
We further link each sequence with $ \nTraitsDiscrete = 21 $ binary traits, including the presence/absence of candidate HLA-associated escape mutations at 20 different amino acid positions in the \textit{gag} protein, and another binary trait for the country of  sampling (Botswana or South Africa).
In cases where ambiguous nucleotide states in a codon prevent the determination of presence/absence of escape mutations, we encode binary trait states as unobserved (ranging from 0.2\% to 21\% across taxa) and set them as unbounded dimensions in the truncated normal distribution sampled by BPS.
\subsection{Correlation among traits}\label{sec:application_traitCorr}
We revisit the original study questions in \citet{Payne2014} concerning the extent to which HLA-driven HIV adaptation impacts virulence in both Botswana and South Africa populations. Differences in HIV adaptation and virulence may arise from the fact the HIV epidemic in Botswana precedes that in South Africa, leaving more time for the virus to adapt to protective HLA alleles. Our approach employing a Bayesian inference framework based on the phylogenetic multivariate probit model, is substantially different from \citet{Payne2014} as they did not control for the shared evolutionary history between samples. For this $ \nTaxa = 535, \nTraitsDiscrete = 21 , \nTraitsCont = 3 $ data set,  after fitting the phylogenetic multivariate probit model, we obtain posterior samples for parameters that are of scientific interest.
For MCMC convergence assessment, we run the chain until the minimal effective sample size (ESS) across all dimensions of $\latentData$, $\traitCorr$ and $\traitDiag$ is above 200.
This takes about $ 10^7 $ individual transition kernel applications under our random-scan Gibbs scheme (iterations) and 30 hours on an Amazon EC2 c5.large instance, and we discard the first 10\% of the samples as burn-in.  As a further diagnostic, we execute five independent chains and confirm that the potential scale reduction statistic $ \hat{R} $ for all correlation elements fall within range [1, 1.04], well below the standard convergence criterion of 1.1 \citep{gelman1992inference}.
We implement the method in the software BEAST \citep{beast2018}, and provide the data set and source code in the Supplementary Material \citep{supple}.

The heat map in Figure \ref{fig:3} depicts significant across-trait correlation determined by a 90\% highest posterior density (HPD) interval that does not contain zero. We mainly focus on the last 4 rows that relate to questions addressed by \citet{Payne2014}, e.g.~difference in HLA escape mutations between the two countries and correlation between escape mutations and infection traits (VL and CD4 count) as well as replicative capacity.
\begin{figure}[h!]
	\centering
	\includegraphics[scale=0.45]{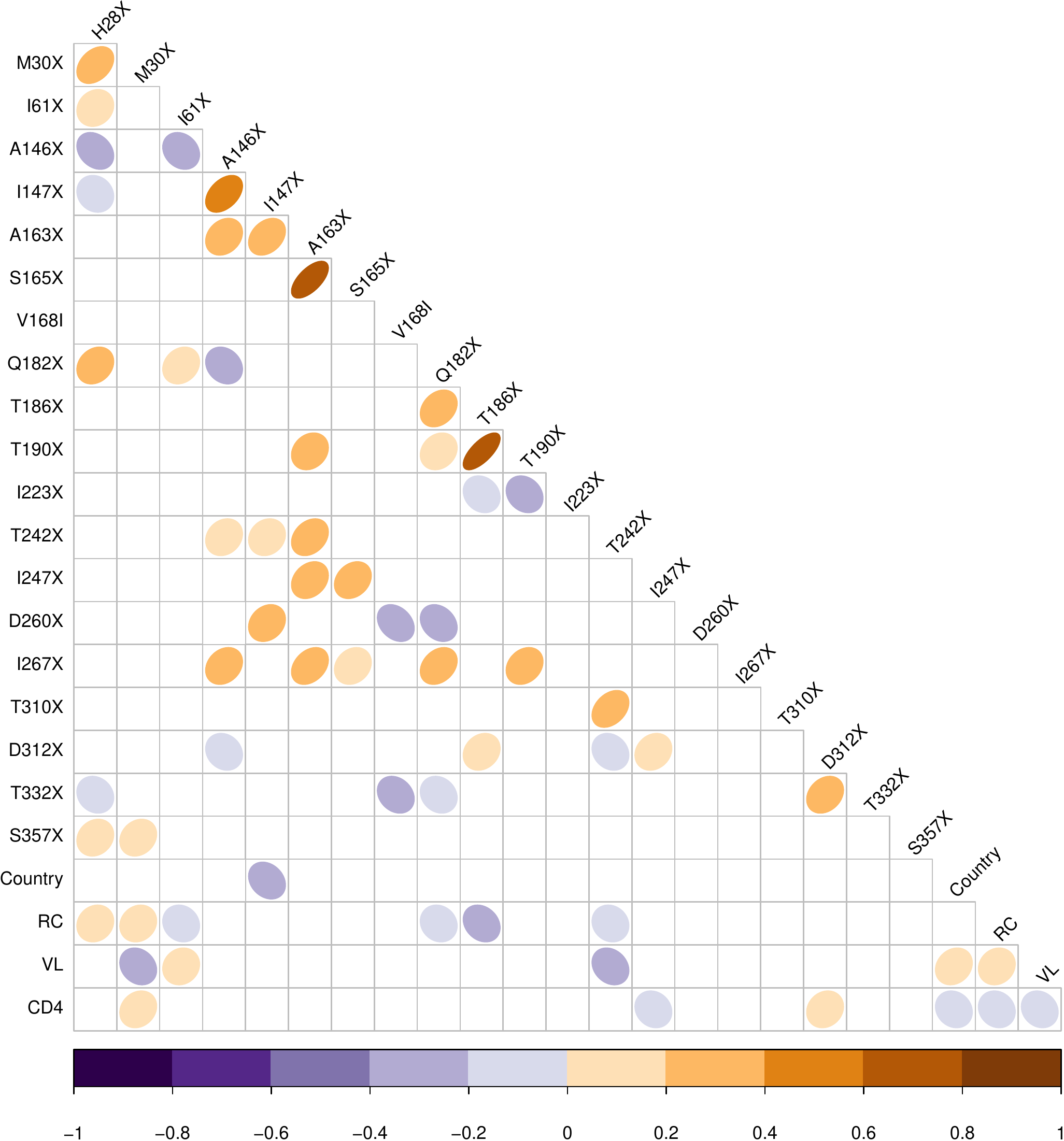}
	\caption{Significant across-trait correlation with $ < 10\%$ posterior tail probability and their posterior mean estimates (in color). HIV \textit{gag} mutations are named by the wild type amino acid state, the amino acid site number according to the standard reference genome (HXB2), and the amino acid `escape' state that is any other amino acid or a deletion (`X') in almost all cases.
		Country = sample region: 1 = South Africa, -1 = Botswana;
		RC = replicative capacity; VL = viral load; CD4 = CD4 cell count.}
	\label{fig:3}
\end{figure}
We identify one escape mutation I147X being significantly more prevalent in Botswana as indicated by its negative correlation with South Africa. Located at the amino-terminal position of an HLA-B57-restricted epitope (`ISW9'), variation at \textit{gag} residue 147  is known to be associated with expression of B57 \citep{Draenert:2004sw}. 
It is worth noting that three of the four escape mutations that correlate negatively with RC (I61X, Q182X and T242X) have a higher frequency in Botswana and may therefore have contributed to the lower RC found in Botswana by \citet{Payne2014}.
Interestingly, the negative effect on RC we estimate for two mutations finds clear confirmation in experimental testing: in vitro experiments provide evidence for a reduction in RC by T242X \citep{martinez2006fitness,song2012impact} and T186X is also found to greatly impair RC \citep{huang2011progression}.

Our analysis recovers the expected inverse correlation between CD4 count and RC or VL, as well as the positive correlation between RC and VL \citep{prince2012role}, confirming that more virulent viruses result in faster disease progression.
Also, South Africa is associated with higher VL and lower CD4, suggesting that the South African cohort may comprise individuals with more advanced disease, even though the two cohorts are closely matched in age \citep{Payne2014}. This is somewhat at odds with the original study that also finds a higher VL for South Africa, but at the same time a higher CD4 count for patients from this country. Such differences are likely to arise from controlling or not for the phylogeny.

The remaining significant correlation between escape mutations (row 1 to 19 in Figure \ref{fig:3}) can be considered as epistatic interactions, some of which are strongly positive.
For example, we find a strong positive correlation between T186X and T190X. The former represents an escape mutation for HLA-B*81-mediated immune responses and has been reported to be strongly correlated with reduced virus replication \citep{huang2011progression,Wright:2010en},
as also reflected in the negative correlation between this mutation and RC.
In fact, \cite{wright2012impact} show T186X requires T190I (or Q182X, also positively correlated with T186X, Figure \ref{fig:3}) to partly compensate for this impaired RC.
The other strong positive correlation between A163X and S165X has also been found to be a case of a compensatory mutation, with S165N partially compensating for the reduced viral RC of A163G \citep{Crawford:2007wn}.
The same holds true for the positive correlation between A146X and I147X, with I147L partially compensating the fitness cost associated with the escape mutation A146P \citep{Troyer:2009ol}.
\subsection{Tree inference}\label{sec:application_tree}
Figure \ref{fig:tree} reports the maximum clade credibility tree from the posterior sample. The tree maximizes the sum of posterior clade probabilities. The posterior mean tree height is roughly 30 years; so with the most recent samples from 2010, we date the common ancestor of all viruses back to around 1980, consistent with the beginning of this epidemic.
\begin{figure}[ht!]
	\centering
	\includegraphics[scale=0.4]{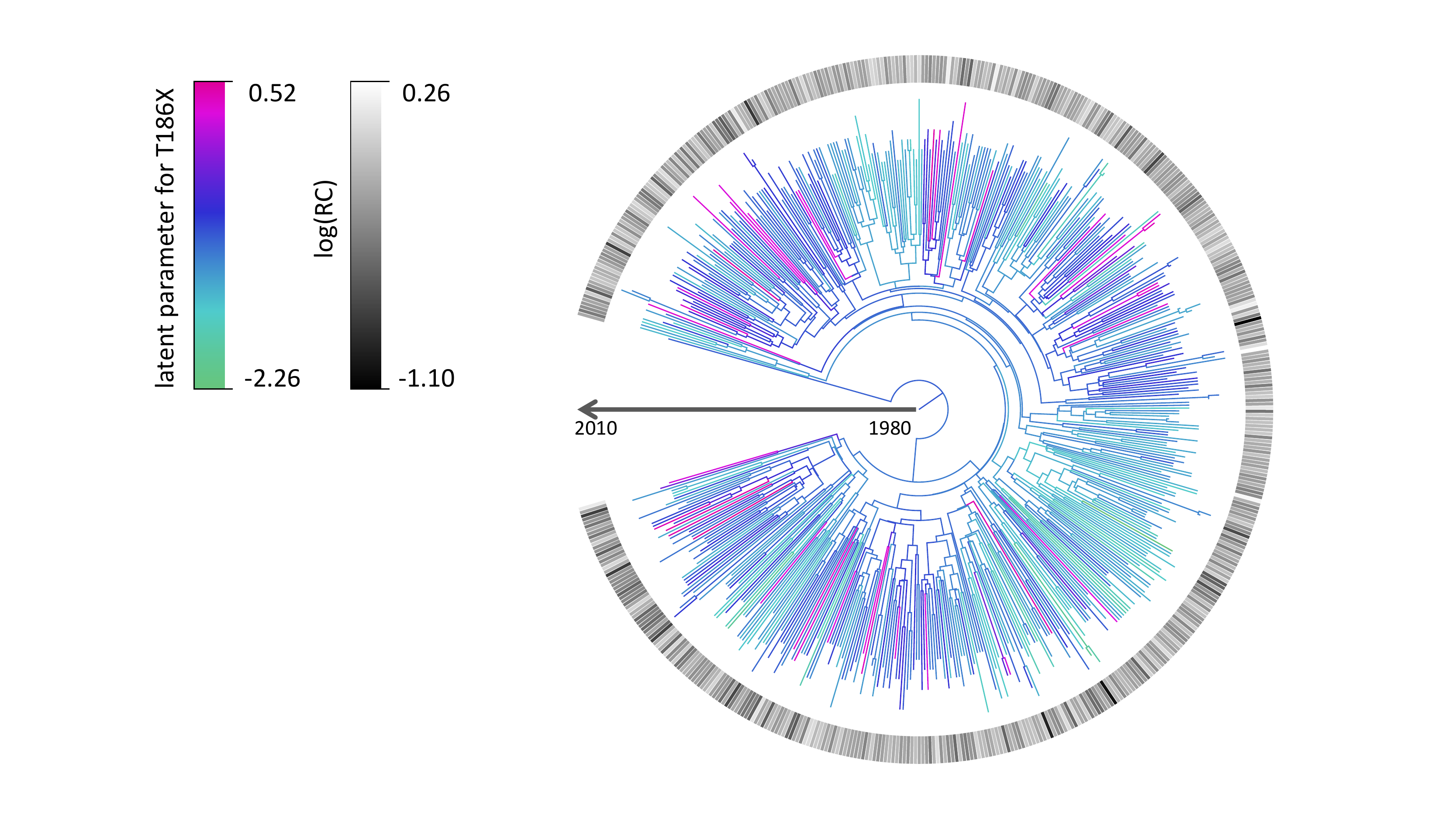}
	\caption{The maximum clade credibility tree with branches colored by the posterior mean of the latent parameter corresponding to mutation T186X. Outer circle shows log(RC) in gray scale.}
	\label{fig:tree}
\end{figure}

\section{Efficiency comparison and goodness-of-fit test}
\subsection{Efficiency comparison}\label{sec:compareEfficiency}
To compare efficiency of BPS with the multiple-try rejection sampling in \citet{Cybis2015}, we run both samplers on the whole data set ($ \nTaxa = 535, \nTraits = 24 $) and a subset with $ \nTraits = 8$ including the three continuous traits, and fix the tree and across-trait covariance at the same values from preliminary runs.
The efficiency criterion is per unit-time ESS across all $ \nTaxa\nTraits $ latent parameters.
BPS outperforms rejection sampling to a greater extent as $ \nTraits $ increases. For $\nTraits = 24$, BPS yields a $74\times$ increase in terms of the minimum ESS and an $11\times$ increase for the median ESS (Table \ref{tb:ess}).
This order-of-magnitude improvement is more clear in Figure \ref{fig:histogramSubsetP}.
Because rejection sampling only updates one taxon per iteration, some latent parameters rarely change their values (Figure \ref{fig:trace}).
As a result, the minimum ESS of multiple-try rejection sampling is much lower than BPS which simultaneously updates all latent dimensions.
\mytableEffiency
\begin{figure}[h!]
	\centering
	\includegraphics[scale=0.5]{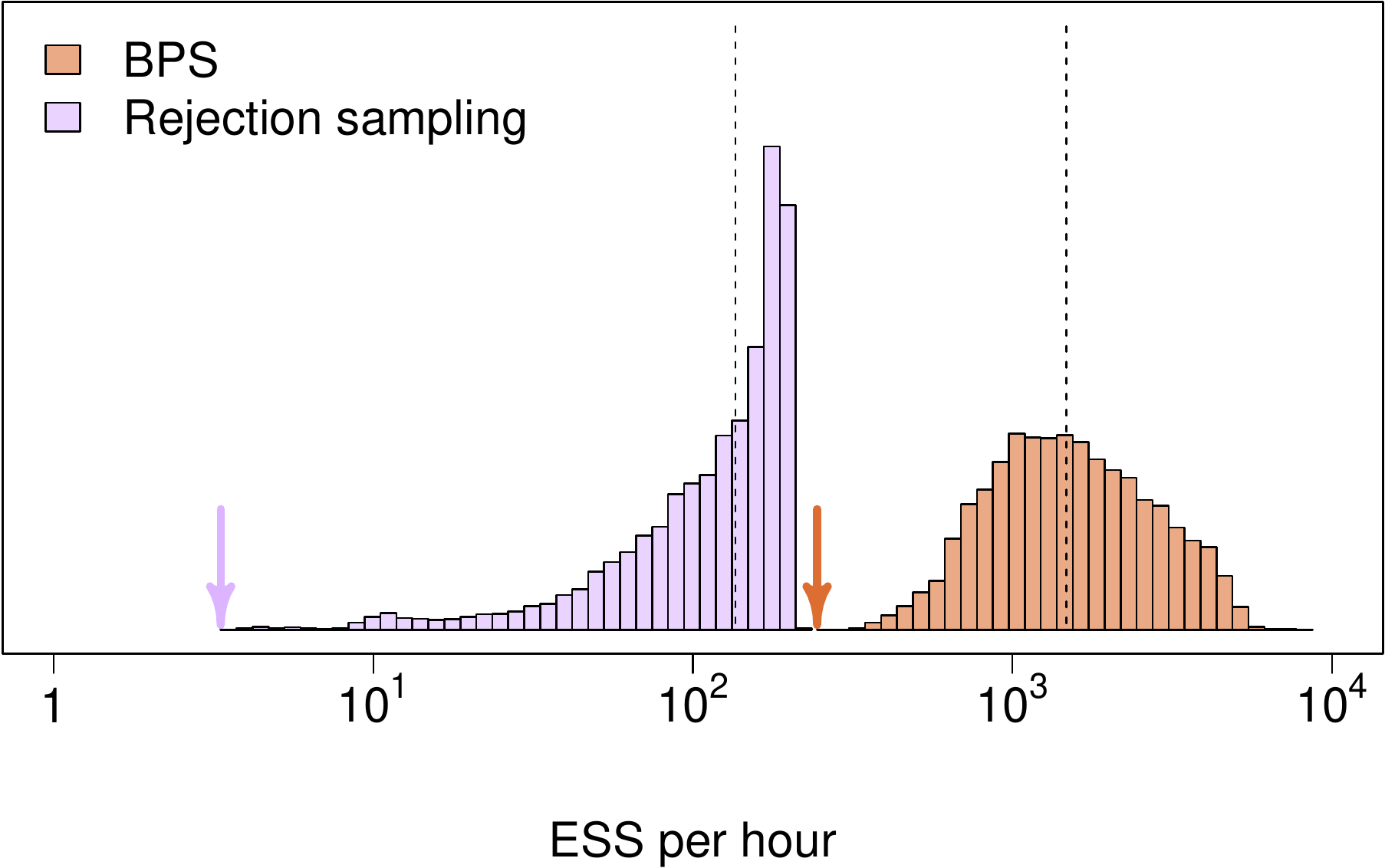}
	\caption{A representative histogram of ESS across latent parameters, sampled by BPS or rejection sampling in one hour run-time. Arrows and dashed lines denote the minimum and median ESS ($\nTaxa = 535, \nTraits = 24 $).}
	\label{fig:histogramSubsetP}
\end{figure}
\begin{figure}[h!]
	\centering
	\includegraphics[scale=0.45]{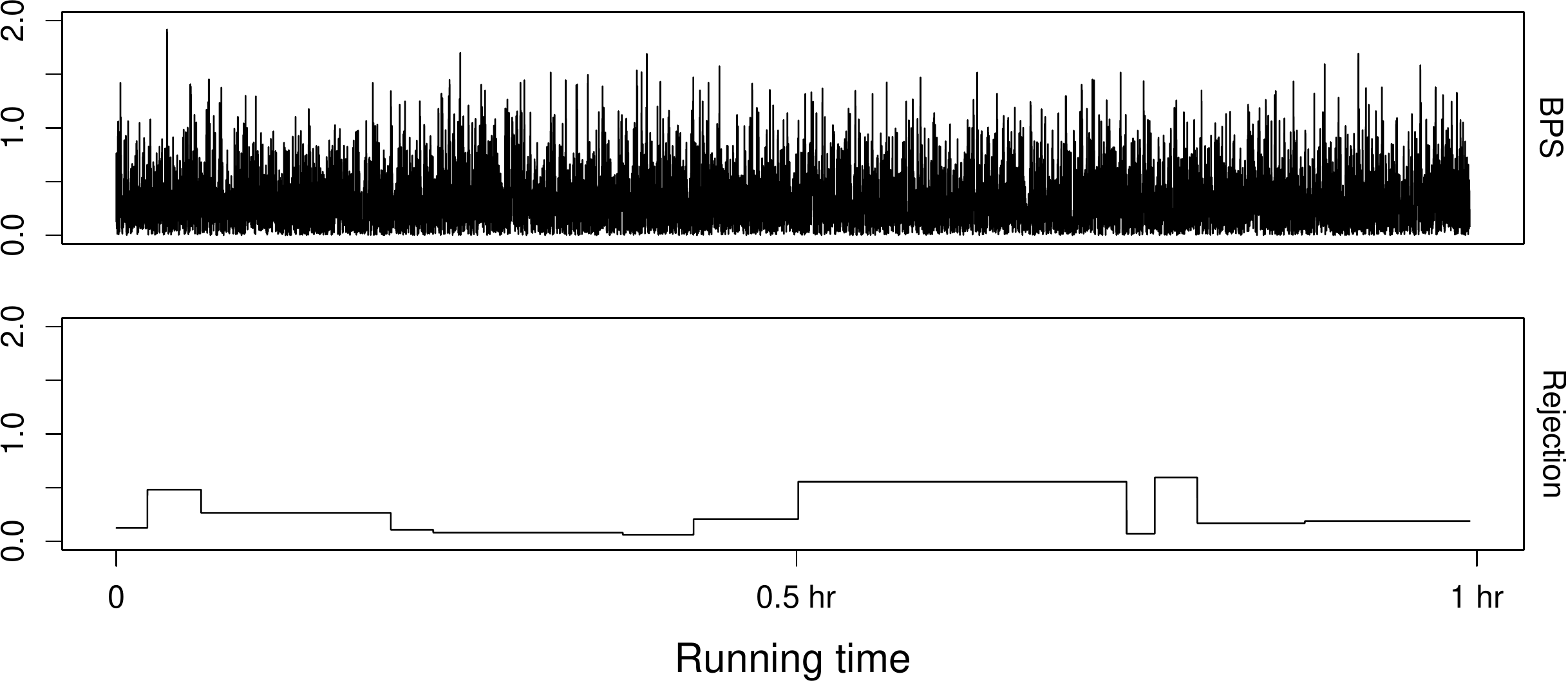}
	\caption{Trace plot of the latent parameter with the least ESS by rejection sampling (bottom) and trace plot of the same latent parameter sampled by BPS (top) for an one hour run-time. BPS and rejection sampling run $ 1.1 \times 10^4 $ and $ 2.6 \times 10^5 $ iterations, respectively ($ \nTaxa  = 535, \nTraits = 24$).}
	\label{fig:trace}
\end{figure}

\subsection{Model goodness-of-fit}\label{sec:goodness}
We compare the phylogenetic probit model fit to reduced models that do not include phylogenetic correction.
This comparison not only allows us to assess goodness-of-fit of the phylogenetic probit model, but also tests whether explicit tree modeling is necessary in practice.
The two reduced models  both assume independence among virus samples such that the across-taxa tree covariance $ \phylogenyVariance$ is diagonal.
The first ``dated star'' model incorporates varying viral sampling time information
such that $ \phylogenyVariance$ has diagonal elements equal to the time distance from virus sample date to the root date fixed, without loss of generality, to 1980.
To understand the star-moniker, phylogeneticists often use a ``star-tree'' in which all branch lengths between internal nodes equal $0$ to represent independent samples.
The second ``ultrametric star'' model, assumes that all taxa have traits that are identically distributed so $\phylogenyVariance$ is an identity matrix. 

For each of the three models, we assess out-of-sample prediction by repeatedly splitting up the HIV data into a training set used to build each model, and a test set to evaluate the prediction.
Across the 21 binary traits for all taxa, we hold out $\numTestTraits = 21 \times 535 \times 20\%$ of the observations, build the model and then estimate the posterior probability $\probEstimate{h}$ for $h = 1,\ldots \numTestTraits$ that held-out trait $h$ equals its observed value.

We summarize performance through quantiles of the score $\log \probEstimate{h}$ to measure accuracy, and a higher score represents better prediction  (Table \ref{tb:prediction}). 
The phylogenetic probit model commands higher scores compared to the two reduced models and we conclude that joint tree modeling through the phylogenetic probit model leads to better data fit. 

\mytableGoodness

\section{Discussion}
We present an efficient Bayesian inference framework to learn the correlation among mixed-type traits across a large number of taxa, while jointly inferring the phylogenetic tree through sequence data.
Our approach significantly improves upon \cite{Cybis2015} in both modeling and inference. Better modeling comes from the decomposition of across-trait covariance matrix $ \traitCovariance= \traitDiag \traitCorr  \traitDiag$ that keeps the generalized probit model identifiable and allows a jointly uniform LKJ prior on $\traitCorr$.
Compared to the convenient but restrictive Wishart prior that causes mixing problems for sampling  $ \traitPrecision $ and $\latentData$, this decomposition facilitates correlation inference among continuous traits and latent parameters (Appendix Figure \ref{sup:tracePlot}). Our main contribution lies in an efficient inference framework, specifically, an optimized BPS to sample latent parameters from a high-dimensional truncated normal distribution.
In contrast to the ``one-taxon-at-a-time" design in \citet{Cybis2015}, BPS jointly updates all dimensions therefore reducing auto-correlation among MCMC samples. The most expensive steps involved are matrix-vector multiplications by the precision matrix $ \bpsPrecision =\grandPrecision$.
In our case, the tree precision matrix is unknown and getting it by matrix inversion is notoriously $ \order{\nTaxa^3} $. Thanks to the insight in Proposition \ref{proposition1}, we circumvent this obstacle by utilizing a dynamic programming strategy and obtain the desired matrix-vector products in $ \order{\nTaxa\nTraits^2}$. BPS also enjoys an advantage especially important for mixed-type traits. That is, we can simply ``mask out" the fixed continuous traits when sampling latent parameters for binary traits. Whereas the rejection sampling in \citet{Cybis2015} has to calculate the conditional distribution of latent dimensions given continuous traits at each tip. This cost-free ``masking"  technique to condition on a subset of dimensions exploits properties of normal distributions and can be shared with other dynamics-based sampler, like HMC.
Taking all of these points together, the optimized BPS provides a huge gain in efficiency.

Naturally, BPS may also be an efficient choice in situations where $ \bpsPrecision $ itself has special structures that facilitate quick matrix-vector multiplication. For example, inducing precision matrices that are sparse or composed of sparse components is a common strategy for analyzing large spatial data \citep{heaton2019case}. 
Methods like the nearest neighbor Gaussian process \citep{datta2016hierarchical}, integrated nested Laplace approximations \citep{rue2009approximate}, and multi-resolution approximation of Gaussian processes \citep{katzfuss2017multi} all achieve computational efficiency from sparsity in $ \bpsPrecision $.
Whether BPS would be useful in these scenarios, especially with mixed-type data, is an interesting topic for future research.

Our application provides important information on the complex association between HLA-driven HIV \textit{gag} mutations and virulence that was previously assessed by experimental and epidemiological studies. To our best knowledge, this is the first study to examine essential HIV virus-host interactions while explicitly modeling the phylogenetic tree.
Our setup is also different from the original study \citep{Payne2014} in that we attempt to identify correlation between individual epitope escape mutations, virulence, and country of sampling, instead of considering all mutations together or grouping them with particular HLA types (e.g.~HLA-B*57/58:01). While the latter may increase power to detect population-level differences in escape mutation frequencies, our approach allows us to pinpoint particular mutations contributing to virulence.
Good consistency between the mutations that we associate with reduced RC and literature reports on virological assays suggests that our approach may complement or help in prioritizing experimental testing, and therefore further assist in the battle against HIV-1. Our method contributes to a general framework to assess correlation among mixed-type traits in virology, but also more broadly in evolutionary biology.

One future improvement lies in the prior choice on across-trait correlation.
The LKJ prior works well for our $ \nTaxa = 535, \nTraits = 24 $ data set, as it is noninformative as desired, and correlation elements are well-mixed through No-U-Turn HMC.
Under this choice, we view correlations with 90\% HPD intervals not covering zero as significant.
We can adjust this decision threshold based on resource availability for follow-up experimental studies.
However, with much larger $ \nTraits $ and when only a small portion of the observed traits are truly involved in the underlying biology, it becomes vital to control for false positive signals, and one may favor a systematic solution.
For example, it may be preferable to put a shrinkage-based prior on $ \traitCorr $ that shrinks individual elements towards zero. Ideas like the graphical lasso prior \citep{wang2012bayesian} and  factor models with shrinkage prior on the loading elements \citep{bhattacharya2011sparse} are potential directions to explore.

Lastly, as understanding the relationship among mixed-type variables is a common question in different fields, our method suits a large class of problems beyond evolutionary biology. The optimized BPS sampler through dynamic programming serves as an efficient inference tool for any multilevel (hierarchical) model \citep{gelman2006multilevel} with an additive covariance structure on a directed acyclic graph (Figure \ref{fig:vmatrix}). The tree variance matrix $\phylogenyVariance$ that we use to describe the covariation of shared evolutionary history also arises from other kinds of relationships.
For example, additive covariance includes pedigree-based or genomic relationship matrices in animal breeding \citep{vitezica2013additive, mrode2014linear} and distance matrices decided by geographical locations in infectious disease research \citep{barbu2013effects}. Intriguingly, our dynamic programming strategy also provides a way to invert the $ \nTaxa \times \nTaxa$ tree variance matrix $\phylogenyVariance $ in $ \order{\nTaxa^2} $ by piecing together the products $\phylogenyPrecision \basisVector{i}$ for $i = 1, \ldots, \nTaxa$.
While this seems likely a well-known result, we have failed to find precedence in the literature.
Finally, the phylogenetic probit model can be generalized to categorical and ordinal data, which will only add to its broad applicability.

\section{Acknowledgments}
We thank Oliver Pybus for useful discussions on an earlier version of the data set analyzed here. The research leading to these results has received funding from the European Research Council under the European Union's Horizon 2020 research and innovation programme (grant agreement no.~725422 - ReservoirDOCS).
The Artic Network receives funding from the Wellcome Trust through project 206298/Z/17/Z.
PB acknowledges support by the Research Foundation -- Flanders (`Fonds voor Wetenschappelijk Onderzoek – Vlaanderen', 12Q5619N and V434319N).
MAS acknowledges support through NSF grant DMS 1264153 and NIH grants R01 AI107034 and U19 AI135995.
PL acknowledges support by the Research Foundation -- Flanders (`Fonds voor Wetenschappelijk Onderzoek -- Vlaanderen', G066215N, G0D5117N and G0B9317N).

\renewcommand{\thefigure}{A.\arabic{figure}}
\renewcommand{\thetable}{A.\arabic{table}}
\setcounter{table}{0}
\setcounter{figure}{1}
\appendix
\section{BPS details}
\subsection{BPS modification for conditional truncated MVNs}
\label{sec:conditional tMVN}
Here we consider modifying the BPS to incorporate fixed dimensions that are the observed, continuous traits in our mixed-type model. We partition $ \bpsPosition = \left( \bpsPosition_b , \bpsPosition_c \right)$
by latent and observed dimensions and then generate samples from the conditional distribution $ \cDensity{\bpsPosition_b}{\bpsPosition_c}$. To make progress, we parameterize $ \cDensity{\bpsPosition_b}{\bpsPosition_c}$ in terms of $ \density{\bpsPosition} $ with partitioned mean $  \tipMeanMatrixVec = (\tipMeanMatrixVec_b, \tipMeanMatrixVec_c) $ and precision matrix
\begin{equation}
\grandPrecision =
\begin{bmatrix}
\bPhi_{bb} & \bPhi_{bc} \\
\bPhi_{cb} & \bPhi_{cc}
\end{bmatrix}.
\end{equation}
With a similarly partitioned velocity $ \bpsVelocity = (\bpsVelocity_b , \bpsVelocity_c) $, the distribution $ \cDensity{\bpsPosition_b}{\bpsPosition_c}$ carries potential energy
\begin{equation}
U_{b \given c}(\bpsPosition_b + t \bpsVelocity_b)
= \frac{t^2}{2} \bpsVelocity_b^\intercal \bPhi_{bb} \bpsVelocity_b + t \bpsVelocity_b^\intercal \bPhi_{bb} (\bpsPosition_b - \tipMeanMatrixVec_{b \given c}) + \constant,
\end{equation}
where constant $ \constant $ does not depend on $ t $. The conditional mean  $\tipMeanMatrixVec_{b \given c} = \tipMeanMatrixVec_b - \bPhi_{bb}^{-1} \bPhi_{bc} (\bpsPosition_c - \tipMeanMatrixVec_c)$, so
\begin{multline}\label{eq:energy_masking}
U_{b \given c}(\bpsPosition_b + t \bpsVelocity_b)
\\= \frac{t^2}{2} \bpsVelocity_b^\intercal \bPhi_{bb} \bpsVelocity_b + t \bpsVelocity_b^\intercal \left[ \bPhi_{bb} (\bpsPosition_b - \tipMeanMatrixVec_b) + \bPhi_{bc} (\bpsPosition_c - \tipMeanMatrixVec_c) \right]
+ \constant.
\end{multline}
This expression is equivalent to masking out the dimensions of $ \bpsVelocity $ in \eqref{eq:truncated_normal_energy} that corresponds to $ \bpsPosition_c $ via the vector $\tilde{\bpsVelocity} = (\bpsVelocity_b, \bm{0})$.  To be explicit, we rewrite \eqref{eq:energy_masking} as
\begin{equation}
U_{b \given c}(\bpsPosition_b + t \bpsVelocity_b)
= \frac{t^2}{2} \tilde{\bpsVelocity}^\intercal \bPhi \tilde{\bpsVelocity} + t \tilde{\bpsVelocity}^\intercal \bPhi (\bpsPosition - \tipMeanMatrixVec)
+ \constant.
\end{equation}
Therefore, adding this masking operation for $ \bpsVelocity,\bpsPhix,\bpsPhiv $ in Lines 1, 2, 5, 7 in Algorithm~\ref{alg:bps_conditional} allows sampling
from the conditional truncated MVN $\cDensity{\bpsPosition_b}{\bpsPosition_c}$ without any additional cost.
\subsection{Tuning $\totalTime$ for BPS}\label{sec:tuningBPS}
The total simulation time $ \totalTime $ for the Markov process is a tuning parameter in Algorithm~\ref{alg:bps_conditional}.
	If $\totalTime$ is too small, the particle does not travel far enough from the initial position, leading to high auto-correlation among MCMC samples.
	On the other hand, an unnecessarily large $\totalTime$ would waste computational efforts without any substantial gain in mixing rate. 
	To achieve best computational efficiency, therefore, one would like to choose a $\totalTime$ just large enough that $\bpsPosition(\totalTime)$ is effectively independent of $\bpsPosition(0)$.
	To help find such $\totalTime$ for BPS applied to truncated MVNs, we develop a heuristic based on the following observations.
	
	At stationarity, the BPS has a velocity distributed as $\normalDistribution{\bm{0}}{\I}$.
	In other words, we have $\bpsVelocity(t) \sim \normalDistribution{\bm{0}}{\I}$ for all $t \geq 0$ if starting from stationarity.
	In particular, the velocity along any unit vector $\bu$ would be distributed as $\langle \bpsVelocity(t), \bu \rangle \sim \normalDistribution{0}{1}$, so that $\mathbb{E}|\langle \bpsVelocity(t), \bu \rangle| = \sqrt{2 / \pi}$. 
	Now, the motion of the particle along $\bu$ is given by $\langle \bpsPosition(t), \bu \rangle = \langle \bpsPosition(0), \bu \rangle + \int_0^t \langle \bpsVelocity(s), \bu \rangle \mathrm{d} s$.
	At the same time, for a MVN with covariance $\grandVariance$, its high density region has a diameter proportional to $\sqrt{\lambda_{\text{max}}}$, where $\lambda_{\text{max}}$ denotes the largest eigenvalue of $ \grandVariance $.
	Therefore, in order to allow the particle to travel across the high density region, we would like it to move a distance proportional to $\sqrt{\lambda_{\text{max}}}$, that is, $ | \int_0^{\totalTime} \langle \bpsVelocity(s), \bu \rangle \diff s | \propto \sqrt{\lambda_{\text{max}}} $.
	
	Since BPS is designed to suppress the random-walk behavior of more traditional MCMC algorithms \citep{peters2012}, we expect the motion of the particle along $\bu$ not to change its direction frequently.
	Or equivalently, we expect the velocity along $\bu$, given by $ \langle \bpsVelocity(t), \bu \rangle $, not to change its sign frequently.
	When there is no change in $\langle \bpsVelocity(t), \bu \rangle$ during $[0, \totalTime]$, we would have
	$  | \int_0^{\totalTime} \langle \bpsVelocity(s), \bu \rangle \diff s |  = \int_0^{\totalTime} | \langle \bpsVelocity(s), \bu \rangle | \diff s $.
	This, combined with the observation that $\mathbb{E}|\langle \bpsVelocity(t), \bu \rangle| = \sqrt{2 / \pi}$ at stationarity, suggest that roughly, the particle moves an average distance of $ \sqrt{2 / \pi} $ during one unit of time. 
	We so conjecture that there is a choice of travel time $\totalTime \propto \sqrt{\lambda_{\text{max}}}$ that achieves $| \int_0^{\totalTime} \langle \bpsVelocity(s), \bu \rangle \diff s | \propto \sqrt{\lambda_{\text{max}}} $ and good mixing.
	This heuristic applies to a truncated MVN when assuming its high density region diameter is comparable to that of the untruncated MVN.
	We find that BPS performance is not overly sensitive to a specific choice of $\totalTime$.
	After preliminary runs (Table \ref{sup:tableForTravelTime}), we choose $ \totalTime = 0.01\sqrt{\grandVarianceEigenvalueLargest} $ for our $ \nTaxa = 535, \nTraits = 24 $ application, as it yields the maximum median effective sample size (ESS) per hour run-time.

\begin{table}[h]
	\caption{Effective sample size per hour run-time (ESS/hr) of latent parameters sampled by BPS with different $ \totalTime $. We fix the tree and use the No-U-Turn sampler to sample the across-trait covariance matrix. With $ \totalTime = 0.01 \sqrt{\grandVarianceEigenvalueLargest} $, the minimum, 5\%, and 50\% percentile of ESS/hr are either larger or close to those with other $ \totalTime $ values compared. } \label{sup:tableForTravelTime}
	\centering
	\begin{tabular}{lrrrr}
		\toprule
		\multicolumn{1}{c}{ }& \multicolumn{4}{c}{$ \totalTime $} \\
		\cmidrule(l{2pt}r{2pt}){2-4}
		ESS/hr percentile& $ 5 \times 10^{-3} \sqrt{\grandVarianceEigenvalueLargest}  $ & $  10^{-2} \sqrt{\grandVarianceEigenvalueLargest}  $ & $  10^{-1} \sqrt{\grandVarianceEigenvalueLargest}  $& \\
		\midrule
min &  72 & 68 & 27\\
5\% &  227 & 428 & 357\\
50\% &  515 & 1050 & 885\\
		\bottomrule
	\end{tabular}
\end{table}\label{sec:tunint_bps_travel_time}
\section{Identifiability issue with a Wishart prior}
We examine differences between assuming an LKJ + log normal priors on $  \traitDiag \traitCorr  \traitDiag $ and a Wishart prior on $ \traitPrecision $.
For the Wishart case, we set the degree of freedom equal to $ \nTraits + 1 $, so each correlation marginally follows a uniform distribution on $ \left[-1, 1\right] $ \citep{bda13}, and the Normal-Wishart conjugacy yields easy Gibbs sampling for $ \traitPrecision $. Without constraining the marginal variance of any latent dimension, the Wishart prior leaves the model not parameter-identifiable and causes mixing problems, even with a small $ \nTraits = 8$ (Figure \ref{sup:tracePlot}).

\begin{figure}[h!]
	\begin{subfigure}{\textwidth}
		\centering
		\includegraphics[scale=0.5]{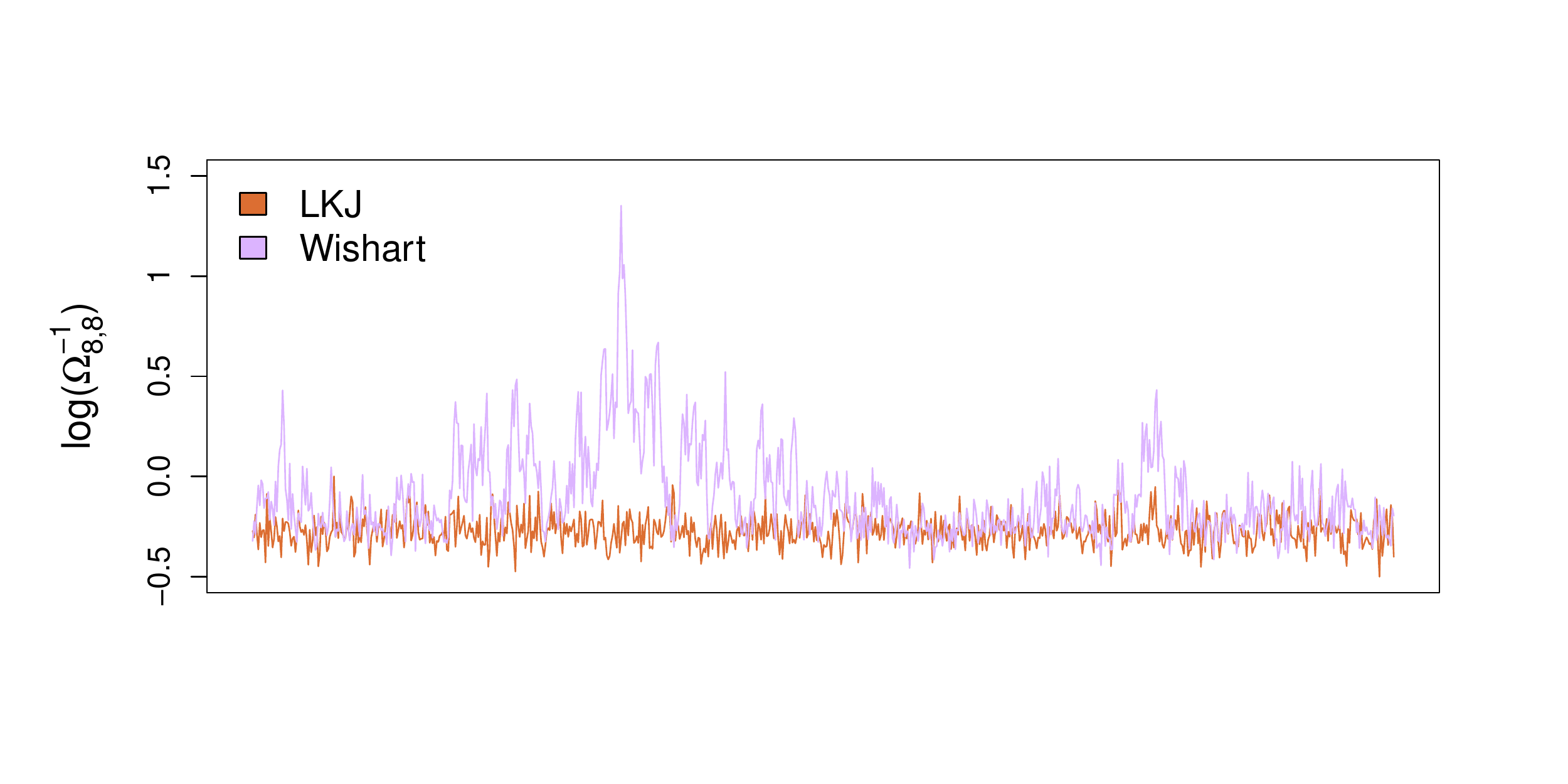}\vspace{-5em}
	\end{subfigure}
	\begin{subfigure}{\textwidth}
		\centering
		\includegraphics[scale=0.5]{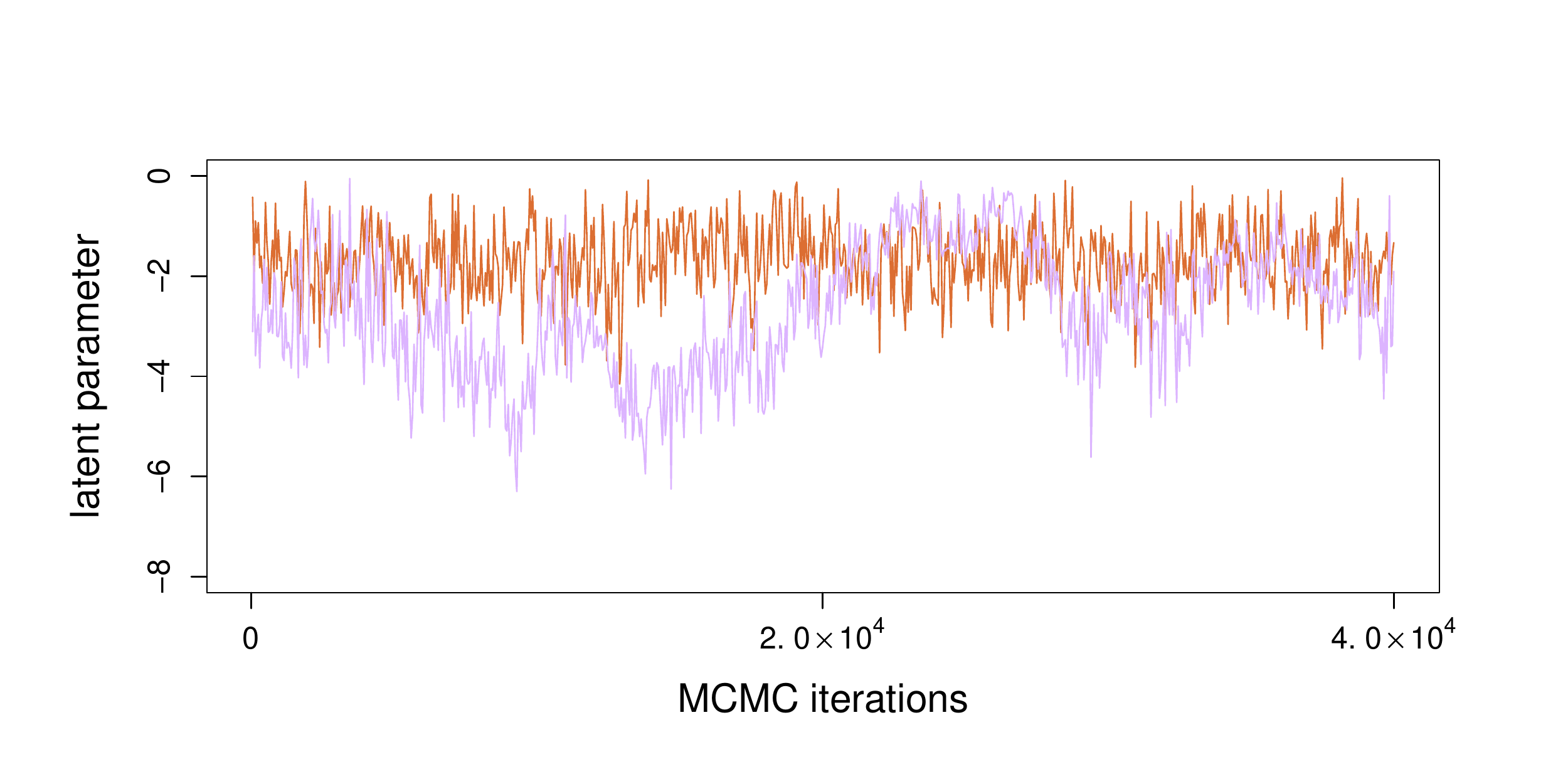}
	\end{subfigure}
	\caption{Trace plot of a representative $ \traitPrecision $ element (top) in log scale and the latent parameter with the least ESS when assuming a Wishart prior on $ \traitPrecision $ (bottom).}
	\label{sup:tracePlot}
\end{figure}

\begin{supplement}
 \stitle{Data set and source code}
 \slink[doi]{COMPLETED BY THE TYPESETTER}
 \sdatatype{.zip}
	\sdescription{We provide the HIV data set and source code to reproduce results in the article.}
\end{supplement}
\clearpage
\bibliographystyle{imsart-nameyear}
\bibliography{bps_paper}
\end{document}